\pdfoutput=1
\RequirePackage{ifpdf}
\ifpdf 
\documentclass[pdftex]{sigma}
\else
\documentclass{sigma}
\fi

\numberwithin{equation}{section}

\newtheorem{Theorem}{Theorem}[section]

\newcommand{\deriv}[3][]{\frac{d^{#1}{#2}}{{d{#3}}^{#1}}}
\newcommand{\pderiv}[3][]{\frac{\partial^{#1}{#2}}{{\partial{#3}}^{#1}}}
\def\dot#1{\deriv{#1}{\xi}}
\def\ddot#1{\deriv[2]{#1}{\xi}}
\def\bar#1{#1}
\def\d{d}\def\i{i}
\def\Ai{\mathop{\rm Ai}\nolimits}
\def\Bi{\mathop{\rm Bi}\nolimits}
\def\cdot{{\scriptstyle\,\bullet\,}}

\def\a{\alpha}
\def\b{\beta}
\def\ep{\varepsilon}
\def\la{\lambda}
\def\ga{\gamma}
\def\ph{\varphi}
\def\th{\vartheta}

\def\PI{${\rm P}_{\rm{I}}$}
\def\PII{${\rm P}_{\rm{II}}$}

\def\Ptf{$\mbox{\rm P}_{\rm XXXIV}$}

\def\N{\mathbb{N}}
\def\xx{\zeta}
\def\xii{s}
\def\zz{x}
\def\cI{c_1}
\def\cII{c_2}
\def\cIII{c_3}
\def\cIV{c_4}
\def\cV{c_5}
\def\cVI{c_6}
\def\cVII{c_7}

\begin{document}

\allowdisplaybreaks

\newcommand{\arXivNumber}{1701.03238}

\renewcommand{\PaperNumber}{018}

\FirstPageHeading

\ShortArticleName{Ermakov--Painlev\'{e}~II Symmetry Reduction of a Korteweg Capillarity System}

\ArticleName{Ermakov--Painlev\'{e}~II Symmetry Reduction\\ of a Korteweg Capillarity System}

\Author{Colin ROGERS~$^\dag$ and Peter A.~CLARKSON~$^\ddag$}

\AuthorNameForHeading{C.~Rogers and P.A.~Clarkson}

\Address{$^\dag$~Australian Research Council Centre of Excellence for Mathematics {\rm \&} Statistics\\
\hphantom{$^\dag$}~of Complex Systems, School of Mathematics, The University of New South Wales,\\
\hphantom{$^\dag$}~Sydney, NSW2052, Australia}
\EmailD{\href{mailto:c.rogers@unsw.edu.au}{c.rogers@unsw.edu.au}}

\Address{$^\ddag$~School of Mathematics, Statistics {\rm \&} Actuarial Science, University of Kent,\\
\hphantom{$^\ddag$}~Canterbury, CT2 7FS, UK}
\EmailD{\href{mailto:P.A.Clarkson@kent.ac.uk}{P.A.Clarkson@kent.ac.uk}}

\ArticleDates{Received January 13, 2017, in f\/inal form March 15, 2017; Published online March 22, 2017}

\Abstract{A class of nonlinear Schr\"odinger equations involving a triad of power law terms together with a de Broglie--Bohm potential is shown to admit symmetry reduction to a hybrid Ermakov--Painlev\'e~II equation which is linked, in turn, to the integrable Painlev\'e~XXXIV equation. A~nonlinear Schr\"odinger encapsulation of a Korteweg-type capillary system is thereby used in the isolation of such a Ermakov--Painlev\'e~II reduction valid for a multi-parameter class of free energy functions. Iterated application of a B\"acklund transformation then allows the construction of novel classes of exact solutions of the nonlinear capillarity system in terms of Yablonskii--Vorob'ev polynomials or classical Airy functions. A~Painlev\'e~XXXIV equation is derived for the density in the capillarity system and seen to correspond to the symmetry reduction of its Bernoulli integral of motion.}

\Keywords{Ermakov--Painlev\'{e}~II equation; Painlev\'{e} capillarity; Korteweg-type capillary system; B\"{a}cklund transformation}

\Classification{37J15; 37K10; 76B45; 76D45}

\section{Introduction}

Giannini and Joseph \cite{jgrj89}, in a nonlinear optics context, introduced a class of symmetry reductions for a cubic nonlinear Schr\"odinger (NLS) equation
\begin{gather}\label{eq:nls}
i\Psi_t + \Psi_{xx}+\nu |\Psi|^2\Psi=0,
\end{gather}
with subscripts denoted partial derivatives,
which resulted in the Painlev\'e II equation but with zero parameter $\a$
\begin{gather}\label{eq:PII0}
\deriv[2]{q}{z} = 2q^3+zq,
\end{gather}
see also \cite{ams15,ksc,ms12}; the reduction of the NLS equation \eqref{eq:nls} to equation \eqref{eq:PII0} was derived in \cite{gw89,taj83}. Numerical integration led to the isolation of interesting non-stationary solutions both bounded and stable in shape for a restricted range of ratio of nonlinearily to dispersion. However, the absence of the Painlev\'e parameter $\a$ in that reduction does not allow the iterative construction of sequences of exact solutions via the B\"acklund transformation for the canonical Painlev\'e~II~(\PII)
\begin{gather} \label{P2}
\deriv[2]{q}{z} = 2q^3 + z q + \a, \end{gather}
with $\a$ a parameter, as given by Gambier \cite{refGambier10} and Lukashevich \cite{nl71}. Here, symmetry reduction to \PII\ \eqref{P2} or to a hybrid Ermakov--Painlev\'e II equation linked to the integrable Painlev\'e XXXIV (\Ptf) equation
\begin{gather} \label{P34}
\deriv[2]{p}{z} = \frac{1}{2p} \left(\deriv{p}{z}\right)^2 + 2p^2 - z p - \frac{(\a+\tfrac12)^2}{2p}, \end{gather}
with $\a$ a parameter, is obtained for a wide class of NLS equations which incorporates a triad of power law terms together with a de Broglie--Bohm potential term. It is remarked that NLS equations involving such triple power law nonlinearities arise in nonlinear optics (see \cite{memm16, yxpsdmkkmmabmb16} and literature cited therein). Moreover, NLS equations containing a de Broglie--Bohm term also arise in the analysis of the propagation of optical beams \cite{crbmkcha10,wwhhjm68} as well as in cold plasma physics~\cite{jlopcrws07}. Under appropriate conditions, such `resonant' NLS equations admit novel fusion or f\/ission solitonic behaviour \cite{jlop07,jlopcrws07,opjl02,opjlcr08}.

The Ermakov--Painlev\'e II symmetry reduction is applied here to an NLS encapsulation of a~nonlinear capillarity system with origin in classical work of Korteweg~\cite{dk01}. Iterated application of a B\"acklund transformation admitted by~\PII~\eqref{P2} permits the construction via the linked \Ptf\ equation of novel multi-parameter wave packet solutions to the capillarity system in terms of either Yablonskii--Vorob'ev polynomials or classical Airy functions. These are shown to be valid for a multi-parameter class of model specif\/ic free energy relations. An invariance of the $(1+1)$-dimensional Korteweg capillarity system under a one-parameter class of reciprocal transformations as recently set down in \cite{crws14} allows the extension of the reduction procedure to a yet wider class of capillarity systems.

\section{A Ermakov--Painlev\'e II symmetry reduction}
Here, a class of $(1+1)$-dimensional nonlinear Schr\"odinger equations of the type
\begin{gather} \label{1}
\i \Psi_t + \Psi_{xx} - \left[ (1 - \mathcal{C}) \frac{|\Psi|_{xx}}{|\Psi|} -\i c \frac{|\Psi|_{x}}{|\Psi|^2} + \la |\Psi|^2 + \mu |\Psi|^{2m} + \nu |\Psi|^{2n} \right] \Psi = 0, \end{gather}
which incorporates a de Broglie--Bohm potential $|\Psi|_{xx}/|\Psi|$ and a triad of power law terms is investigated under a symmetry reduction. Thus, constraints on the parameters in \eqref{1} are sought for which the class of NLS equations admits symmetry reduction either to the \PII\ \eqref{P2}, with non-zero parameter $\a$, or to a hybrid Ermakov--Painlev\'e II equation under a wave packet ansatz
\begin{subequations}\label{2-3}\begin{gather} \label{2}
\Psi = [ \phi (\xi) + \i \psi (\xi) ] \exp(\i \eta), \end{gather}
with
\begin{gather} \label{3}
\xi = \bar{\a} t + \bar{\b} t^2 + \bar{\ga} x ,\qquad \eta = \bar{\gamma} t^3 + \bar{\delta} t^2 + \bar{\ep} \bar{\ga} t x + \bar{\zeta} t + \bar{\la} x.\end{gather}\end{subequations}
In the nonlinear optics context of \cite{jgrj89} such a similarity transformation was used to reduce a~standard cubic NLS equation to Painlev\'e~II but with zero parameter $\a$ and resort was made to a numerical treatment. Asymptotic properties of \PII\ \eqref{eq:PII0} with $\a=0$ have been discussed by various authors, see, e.g., \cite{mahs81,bclm,cmcl88,dz95,fikn06,hmcl81,jm78}.

In the present case, on introduction of the wave packet ansatz \eqref{2-3} into \eqref{1}, it is seen that
\begin{subequations} \label{4-5}\begin{gather} \label{4}
\bar{\ga}^2 \ddot{\phi} - \dot{\psi} \big[ 2\big(\bar{\b} + \bar{\ep} \bar{\ga}^2\big) t + \bar{\a} + 2\bar{\la} \bar{\ga} \big] + \frac{c\bar{\gamma}\psi}{|\Psi|^3} \left(\phi \dot{\phi} + \psi \dot{\psi}\right) - \Delta \phi = 0 , \\ 
\label{5}
\bar{\ga}^2 \ddot{\psi} + \dot{\phi} \big[ 2\big(\bar{\b} + \bar{\ep} \bar{\ga}^2\big) t + \bar{\a} + 2\bar{\la} \bar{\ga} \big] - \frac{c\bar{\gamma} \phi}{|\Psi|^3} \left(\phi \dot{\phi} + \psi \dot{\psi}\right) - \Delta \psi = 0, \end{gather}\end{subequations}
where
\begin{gather}
\Delta = 3\bar{\ga} t^2 + 2 \bar{\delta} t + \bar{\ep} \bar{\ga} x + \bar{\zeta} + (\bar{\ep} \bar{\ga} t + \bar{\la})^2 + \la |\Psi|^2 + \mu |\Psi|^{2m} + \nu |\Psi|^{2n} \nonumber\\
\hphantom{\Delta =}{} + \dfrac{s \bar{\ga}^2}{|\Psi|^4} \left\{ \left[\phi \ddot{\phi} + \psi \ddot{\psi} + \left(\dot{\phi}\right)^2 + \left(\dot{\psi}\right)^2\right]
\left(\phi^2 + \psi^2\right) - \left(\phi \dot{\phi} + \psi \dot{\psi}\right)^2 \right\},\!\!\! \label{6}
\end{gather}
with $s=1-\mathcal{C}$. The relations \eqref{4-5} together show that
\begin{gather}
\bar{\ga}^2 \left(\ddot{\phi} \psi - \ddot{\psi} \phi\right) - \left(\phi \dot{\phi} + \psi \dot{\psi}\right) \big[ 2\big(\bar{\b} + \bar{\ep} \bar{\ga}^2\big) t + \bar{\a} + 2\bar{\la} \bar{\ga} \big]\nonumber\\
\qquad{} + \frac{c\bar{\gamma}}{|\Psi|} \left(\phi \dot{\phi} + \psi \dot{\psi}\right) = 0, \label{7} \end{gather}
whence it is required that
\begin{gather*} \label{8}
\bar{\b} + \ep \bar{\ga}^2 = 0, \end{gather*}
in which case equation \eqref{7} admits the integral
\begin{gather} \label{9}
\bar{\ga}^2 \left(\dot{\phi} \psi - \dot{\psi} \phi\right) - \tfrac{1}{2} (\bar{\a} + 2\bar{\la} \bar{\ga}) |\Psi|^2 + c \bar{\ga} |\Psi| = \mathcal{I}, \end{gather}
where $\mathcal{I}$ is an arbitrary constant of motion.

On use of the relation
\begin{gather} \label{10}
\left[\phi \ddot{\phi} + \psi \ddot{\psi} + \left(\dot{\phi}\right)^2 + \left(\dot{\psi}\right)^2\right] (\phi^2 + \psi^2) - \left(\phi \dot{\phi} + \psi \dot{\psi}\right)^2 =|\Psi|^3 \deriv[2]{|\Psi|}{\xi} , \end{gather}
it is seen that \eqref{6} yields, if $\bar{\b}\neq0$,
\begin{gather}
 \Delta = \bar{\ga} \bar{\b}^{-1} \big(\bar{\ep}^2 \bar{\ga} + 3\big) (\xi - \bar{\a} t - \bar{\ga} x) + 2(\bar{\delta} + \bar{\ep} \bar{\ga} \bar{\la}) t + \bar{\ep} \bar{\ga} x + \bar{\zeta} + \bar{\la}^2 \nonumber\\
\hphantom{\Delta =}{} + \la |\Psi|^2 + \mu |\Psi|^{2m} + \nu |\Psi|^{2n} + \dfrac{s \bar{\ga}^2}{|\Psi|} \deriv[2]{|\Psi|}{\xi} \nonumber\\
\hphantom{\Delta}{}= \bar{\ep} \xi + \bar{\zeta} + \bar{\la}^2 + \la |\Psi|^2 + \mu |\Psi|^{2m} + \nu |\Psi|^{2n} + \dfrac{s \bar{\ga}^2}{|\Psi|} \deriv[2]{|\Psi|}{\xi},\label{11}\end{gather}
on setting
\begin{gather} 
\bar{\b} \bar{\ep} = \bar{\ga} \big(3 + \bar{\ep}^2 \bar{\ga}\big),\qquad 
\bar{\a} \bar{\ep} = 2(\bar{\delta} + \bar{\ep} \bar{\ga} \bar{\la}). \label{13}\end{gather}
Moreover, equations \eqref{4-5} again combine to show that
\begin{gather*} \label{14}
\bar{\ga}^2 \left(\phi \ddot{\phi} + \psi \ddot{\psi}\right) + \left(\dot{\phi} \psi - \dot{\psi} \phi\right) (\bar{\a} + 2 \bar{\la} \bar{\ga})
- \Delta |\Psi|^2 = 0, \end{gather*}
whence, on use of the identity
\begin{gather*} \label{15}
(\phi^2 + \psi^2) \left[\left(\dot{\phi}\right)^2 + \left(\dot{\psi}\right)^2\right] - \left(\dot{\phi} \psi - \dot{\psi} \phi\right)^2 \equiv \left(\phi \dot{\phi} + \psi \dot{\psi}\right)^2, \end{gather*}
together with \eqref{10}, it is seen that
\begin{gather*} \label{16}
\bar{\ga}^2 \left[ |\Psi|^3 \deriv[2]{|\Psi|}{\xi}- \left(\dot{\phi} \psi - \dot{\psi} \phi\right)^2 \right] + (\bar{\a} + 2 \bar{\la} \bar{\ga}) \left(\dot{\phi} \psi - \dot{\psi} \phi\right) |\Psi|^2 - \Delta |\Psi|^4 = 0.\end{gather*}
The latter, by virtue of the integral of motion \eqref{9} and the expression \eqref{11} for $\Delta$ now produces a nonlinear equation in the amplitude $ |\Psi|$, namely
\begin{gather}
\deriv[2]{|\Psi|}{\xi} + [ {{\cI}} + {{\cII}} \xi ] |\Psi| + {{\cIII}} |\Psi|^3 + {{\cIV}} |\Psi|^{2m+1} + {{\cV}} |\Psi|^{2n+1} + \frac{{{\cVI}} }{|\Psi|} + \frac{{{\cVII}}}{ |\Psi|^2}\nonumber\\
\qquad{} = \frac{\mathcal{I}^2} {(1-s)\bar{\ga}^4 |\Psi|^3}, \label{17}\end{gather}
where the constants ${{\cI}},{{\cII}},\ldots,{{\cVII}}$ are given by
\begin{subequations}\label{18}\begin{gather}
{{\cI}} = \dfrac{\left( \bar{\a} - {\bar{\delta}}/{\bar{\ep}} \right)^2 - \bar{\ga}^2 (\bar{\zeta} + \bar{\la}^2)} {(1-s)\bar{\ga}^4}, 
\qquad {{\cII}} = \dfrac{\bar{\ep}} {(s-1)\bar{\ga}^2} ,\qquad {{\cIII}} = \dfrac{\la} {(s-1)\bar{\ga}^2} ,\\
{{\cIV}} = \dfrac{\mu} {(s-1)\bar{\ga}^2} , \qquad 
{{\cV}} = \dfrac{\nu} {(s-1)\bar{\ga}^2} ,\qquad {{\cVI}} = \dfrac{c^2} {(s-1)\bar{\ga}^2} ,\qquad {{\cVII}} = \dfrac{2c
\mathcal{I}} {(1-s)\bar{\ga}^3} ,
 \end{gather}\end{subequations}
and it is required that $s\neq1$.

Below a triad of cases is set down in which the amplitude equation \eqref{17} reduces either directly to \PII\ or to a hybrid Ermakov--Painlev\'e II equation subsequently shown to be integrable.

\subsubsection*{Case (i) $\boldsymbol{\mathcal{I}=0}$; $\boldsymbol{m=-\tfrac12}$; $\boldsymbol{n=-1}$.}
In this case with
\begin{gather*} \label{19}
{{\cI}} = 0 ,\qquad {{\cII}} = -1 ,\qquad {{\cIII}} = -2 ,\qquad {{\cIV}} = -\a ,\qquad {{\cV}} + {{\cVI}} = 0,\end{gather*}
the amplitude equation reduces directly to the Painlev\'e II equation
\begin{gather} \label{20}
\deriv[2]{|\Psi|}{\xi} = 2 |\Psi|^3 + \xi |\Psi| + \a , \end{gather}
corresponding to the symmetry reduction via the ansatz \eqref{2-3} of the class of NLS equations
\begin{gather*} \label{21}
\i \Psi_t + \Psi_{xx} - \left[ (1 - \mathcal{C}) \frac{|\Psi|_{xx}}{|\Psi|} -\i c \frac{|\Psi|_{x}}{|\Psi|^2} + \mathcal{C} \bar{\ga}^2 |\Psi|^2 + \frac{\mathcal{C} \bar{\ga}^2 \a}{ |\Psi|} - \frac{c^2}{ |\Psi|^2} \right] \Psi = 0. \end{gather*}

\subsubsection*{Case (ii) $\boldsymbol{\mathcal{I}=0}$; $\boldsymbol{c=0}$; $\boldsymbol{m=-\tfrac12}$; $\boldsymbol{n=0}$.}
Here, with
\begin{gather*} \label{22}
{{\cI}} = -{{\cV}} ,\qquad {{\cII}} = -1 ,\qquad {{\cIII}} = -2 ,\qquad {{\cIV}} = -\a, \end{gather*}
the \PII\ equation \eqref{20} again results, while the associated class of NLS equations \eqref{1} becomes
\begin{gather} \label{23}
\i \Psi_t + \Psi_{xx} - \left[ (1 - \mathcal{C}) \frac{|\Psi|_{xx}}{|\Psi|} + \mathcal{C} \bar{\ga}^2 |\Psi|^2 + \frac{\mathcal{C} \bar{\ga}^2 \a}{ |\Psi|} + \nu \right] \Psi = 0.\end{gather}
It is remarked that in the absence of the de Broglie--Bohm term, a time-independent NLS equation of this type~\eqref{23} incorporating a nonlinearity $\sim|\Psi|^{-1}$ has been derived `ab initio' in~\cite{ysjh94} via a geometric model which describes stationary states of supercoiled DNA.

\subsubsection*{Case (iii) $\boldsymbol{\mathcal{I}\neq0}$; $\boldsymbol{c=0}$; $\boldsymbol{m=-\tfrac12}$; $\boldsymbol{n=0}$.}
In this case, equation \eqref{17} reduces to a hybrid `Ermakov--Painlev\'e II' equation of the type
\begin{gather} \label{24}
\deriv[2]{|\Psi|}{\xi} + \ep |\Psi|^3 + (\delta \xi + \zeta) |\Psi| = \frac{\sigma}{|\Psi|^{3}}, \end{gather}
and which will be subsequently seen to be linked to \Ptf\ \eqref{P34}. It is recalled that the classical Ermakov equation with roots in \cite{ve80}, namely
\begin{gather*} \label{25}
\deriv[2]{\bar{\mathcal{E}}}{\xi} + \omega(\xi) \bar{\mathcal{E}} = \frac{\sigma}{\bar{\mathcal{E}}^{3}}, \end{gather*}
admits a nonlinear superposition principle readily derived via a Lie group approach as in \cite{crur89,crwspw97}.

In the subsequent application to the Korteweg capillarity system it will be the Ermakov--Painlev\'e II symmetry reduction that will be exploited. With a positive solution $|\Psi|$ of \eqref{24} to hand, the corresponding class of exact solutions for $\Psi$ in the wave packet representation~\eqref{2-3} is obtained via the integral of motion \eqref{9}. Thus, the latter yields
\begin{gather*} 
\bar{\ga}^2 \deriv{}{\xi} \left[ \tan^{-1} \left( \frac{\phi}{\psi} \right) \right] - \bar{\a} + \frac{\bar{\delta}}{\bar{\ep}} + \frac{c \bar{\ga}}{|\Psi|} = \frac{\mathcal{I}}{|\Psi|^2}, \end{gather*}
whence, on integration
\begin{gather} \label{27}
\bar{\ga}^2 \tan^{-1} \left( \frac{\phi}{\psi} \right) = \left( \bar{\a} - \frac{\bar{\delta}}{\bar{\ep}} \right) \xi - c \bar{\ga} \int \frac{1}{|\Psi|}\,\d \xi + \mathcal{I} \int \frac{1}{|\Psi|^2}\,\d \xi, \end{gather}
where use has been made of the relation \eqref{13}. Accordingly, with $V=\phi/\psi$, it is seen that~$\phi$,~$\psi$ in the original wave packet representation are given by the relations
\begin{gather*} 
\phi = \pm\frac{ |\Psi| V }{ \sqrt{1 + V^2}},\qquad \psi = \pm \frac{|\Psi| }{\sqrt{1 + V^2}}.\end{gather*}

In the sequel, the link between the Ermakov--Painlev\'e~II equation \eqref{24} and \Ptf\ \eqref{P34} is used to construct novel classes of wave packet solutions of a nonlinear Korteweg capillarity system in terms of Yablonskii--Vorob'ev polynomials or classical Airy functions via the iterated application of the B\"acklund transformation for \PII\ due to Gambier~\cite{refGambier10} and Lukashevich~\cite{nl71}.

\section{The capillarity system}

In \cite{la96}, Antanovskii derived the isothermal capillarity system with continuity equation
\begin{subequations}\label{29-30}\begin{gather} \label{29}
\rho_t + \operatorname{div} (\rho \mathbf{v}) = 0 , \end{gather}
augmented by the momentum equation
\begin{gather} \label{30}
\mathbf{v}_t + \mathbf{v}\cdot\nabla \mathbf{v} + \nabla \left[ \frac{\delta (\rho \mathcal{E})}{\delta \rho} - \Pi \right] = \mathbf{0} ,
\end{gather}\end{subequations}
where $\rho$ is the density, $\mathbf{v}$ velocity and $\mathcal{E}(\rho,|\nabla\rho|^2/\rho)$ is the specif\/ic free energy. Herein, the standard variational derivative notation
\begin{gather*} 
\dfrac{\delta \Theta}{\delta \rho} = \dfrac{\partial \Theta}{\partial \rho} - \nabla \cdot \left[ \dfrac{\partial \Theta}{\partial \mathcal{A}} \nabla \rho \right], \end{gather*}
is adopted with $\mathcal{A}=\tfrac12|\nabla\rho|^2$. In the above $\Pi$ is an external potential, commonly taken to be that due to gravity, in which case $\Pi=-\rho g$. The classical Korteweg capillarity system as set down in \cite{dk01} is retrieved in the specialisation
\begin{gather*} 
\mathcal{E} (\mathcal{A}, \rho) = \frac{\kappa(\rho) \mathcal{A}}{\rho} ,\qquad \kappa(\rho) > 0, \end{gather*}
in which case, the momentum equation becomes
\begin{gather*} 
\mathbf{v}_t + \mathbf{v}\cdot\nabla \mathbf{v} - \nabla \left[ \kappa(\rho) \nabla^2 \rho + \tfrac12 |\nabla \rho|^2 \deriv{\kappa(\rho)}{\rho} + \Pi \right] = 0.\end{gather*}
The classical Boussinesq capillarity system, in turn, is retrieved as the specialisation with $\kappa$ constant in this Korteweg system. A system analogous to the Boussinesq model arises `mutatis mutandis' in plasma physics~\cite{vbsv05}.

In the case of irrotationality with $\mathbf{v}=\nabla\Phi$, the momentum equation \eqref{30} admits the Bernoulli integral
\begin{gather*} 
\Phi_t + \tfrac12 |\nabla\Phi|^2 + \frac{\delta}{\delta\rho} (\rho \mathcal{E}) - \Pi = \mathcal{B}(t), \end{gather*}
and on introduction of the Madelung transformation \cite{em26}
\begin{gather} \label{35}
\Psi = \rho^{1/2} \exp \left( \tfrac12{i \Phi} \right), \end{gather}
the capillarity system \eqref{29-30} may be encapsulated in the generalised NLS-type equation
\begin{gather} \label{36}
\i \Psi_t + \nabla^2 \Psi + \left[ - \frac{\nabla^2 |\Psi|}{|\Psi|} - \tfrac12 \frac{\delta (\rho \mathcal{E})}{\delta \rho} + \tfrac12{\Pi} \right] \Psi = 0, \end{gather}
incorporating a de Broglie--Bohm potential term.

It was observed by Antanovskii et al.\ in \cite{lacrws87} that if $\Pi=0$ and
\begin{gather} \label{37}
\mathcal{E} \left( \tfrac12 |\nabla \rho|^2, \rho \right) = \mathcal{C} \frac{ |\nabla \rho|^2}{2\rho^2} + \nu \rho + \frac{\tau}{\rho}, \end{gather}
then \eqref{36} reduces, if $\mathcal{C}=1$, to the cubic nonlinear Schr\"odinger equation
\begin{gather*} \label{38}
\i \Psi_t + \nabla^2 \Psi - \nu |\Psi|^2 \Psi = 0.\end{gather*}
If, on the other hand, $\mathcal{C}\neq1$, it is seen that reduction is obtained to a `resonant' NLS-type equation \cite{crws99}
\begin{gather} \label{39}
\i \Psi_t + \nabla^2 \Psi + \left[(\mathcal{C}-1) \frac{\nabla^2 |\Psi|}{|\Psi|} - \nu |\Psi|^2 \right] \Psi = 0.\end{gather}
Moreover, if $\mathcal{C}>0$ as in the present capillarity context, \eqref{39} may be transformed to a standard cubic NLS equation with the de Broglie--Bohm term removed (see, e.g.,~\cite{cr14}). Thus, in $1+1$ dimensions with three-parameter model energy $\mathcal{E}(\tfrac12|\nabla\rho|^2,\rho)$ of the type \eqref{37} reduction is made to a canonical integrable NLS equation. The capillarity system encapsulated in the $(1+1)$-dimensional version of~\eqref{39} then becomes amenable to established methods of soliton theory such as inverse scattering procedures and inherits admittance of invariance under a B\"acklund transformation together with concomitant nonlinear superposition principle (see, e.g., \cite{mapc91,crws98,crws02} and literature cited therein). It is noted that a gravitational potential term $\Pi=-\rho g$ is readily accommodated in the above reduction. Detailed qualitative properties of capillarity systems with model laws of the type~\eqref{37} with K\'{a}rm\'{a}n--Tsien-type law
\begin{gather} \label{40}
\kappa(\rho) = {\mathcal{C}}/{\rho} ,\qquad \mathcal{C} > 0, \end{gather}
have been recently set down in \cite{rcrdjs12} while travelling wave propagation in $(1+1)$-dimensional capillarity theory has been investigated in~\cite{sg13}.

Here, a more general class of model energy $\mathcal{E}(\tfrac12|\nabla\rho|^2,\rho)$ laws is considered, namely that with
\begin{gather*} 
\mathcal{E} = \frac{\kappa(\rho)|\nabla \rho|^2}{2 \rho} + \frac{\mathcal{R}(\rho)}{\rho}, \end{gather*}
so that, with $\Pi=0$, \eqref{36} produces the class of NLS equations
\begin{gather} \label{42}
\i \Psi_t + \nabla^2 \Psi - \left[\left(1 + \deriv{\kappa}{\rho} |\Psi|^4\right) \frac{\nabla^2 |\Psi|}{|\Psi|} - \tfrac12 \left(\kappa(\rho) + \deriv{\kappa}{\rho} |\Psi|^2\right) \left(\nabla |\Psi|\right)^2 + \tfrac12 \deriv{\mathcal{R}}{\rho} \right] \Psi = 0. \end{gather}
Thus, capillarity systems encapsulated in \eqref{42} are isolated which may be aligned with NLS equations of the type \eqref{1} in the case $c=0$, $m=-2$, $n=0$. This occurs for the multi-parameter class of model energy laws with
\begin{gather} \label{43}
\mathcal{E} \big( \tfrac12|\nabla \rho|^2 , \rho \big) = \mathcal{C} \frac{ |\nabla \rho|^2} {2 \rho^2} + \la \rho - \frac{2 \mu}{\rho^2} + 2 \nu + \frac{\tau}{\rho}, \end{gather}
where $\la$, $\mu$, $\nu$ and $\tau$ together with $\mathcal{C}>0$ are real constants. Importantly, this includes in the case $\mu=0$ the class which has been recently subject to a detailed qualitative analysis in \cite{rcrdjs12}. The $\kappa(\rho)$ capillarity relation is seen to be of the K\'{a}rm\'{a}n--Tsien type \eqref{40}. Here,
\begin{gather} \label{44}
\la = - \mathcal{C} \bar{\ga}^2 {{\cIII}} ,\qquad \mu = - \mathcal{C} \bar{\ga}^2 {{\cIV}} ,\qquad \nu = - \mathcal{C} \bar{\ga}^2 {{\cV}}, \end{gather}
in accordance with the relations \eqref{18}. The associated class of NLS equations
\begin{gather*} 
\i \Psi_t + \Psi_{xx} - \left[ (\mathcal{C}-1) \frac{|\Psi|_{xx}}{|\Psi|} + \la |\Psi|^2 + \frac{\mu}{|\Psi|^4} + \nu \right] \Psi = 0, \end{gather*}
hence, admits symmetry reduction via the wave packet ansatz \eqref{2-3} to the hybrid Ermakov--Painlev\'e II equation (cf.~\eqref{17})
\begin{gather} \label{46}
\deriv[2]{|\Psi|}{\xi} + \left[ {{\cI}} + {{\cV}} + {{\cII}} \xi \right] |\Psi| + {{\cIII}} |\Psi|^3 = \frac{\sigma}{|\Psi|^{3}}, \end{gather}
where
\begin{gather} \label{47}
\sigma = \frac{1}{\mathcal{C} \bar{\ga}^2} \left[ \left( \frac{\mathcal{I}}{ \bar{\ga}} \right)^2 + \mu \right].\end{gather}
Interestingly, this symmetry reduction to an integrable Ermakov--Painlev\'e II equation will be admitted by Korteweg-type capillarity systems with the particular model energy laws of the type discussed in~\cite{rcrdjs12}.

Under the translation $\xx=\xi+{({{\cI}}+{{\cV}})}/{{{\cII}}}$, with ${{\cII}}\neq0$, \eqref{46} becomes
\begin{gather} \label{48}
\deriv[2]{|\Psi|}{\xx} + {{\cII}} \xx |\Psi| + {{\cIII}} |\Psi|^3 = \frac{\sigma}{|\Psi|^{3}}, \end{gather}
where the Madelung relation \eqref{35} shows that, in the present capillarity context $|\Psi|=\rho^{1/2}$. Thus, $\rho^{1/2}$ is governed by a hybrid Ermakov--Painlev\'e II equation while in terms of the density~$\rho$ it is seen that~\eqref{48} produces
\begin{gather} \label{49}
\deriv[2]{\rho}{\xx} = \frac{1}{2\rho} \left(\deriv{\rho}{\xx}\right)^2 - 2 {{\cIII}} \rho^2 - 2 {{\cII}} \xx \rho + \frac{2\sigma}{\rho},\end{gather}
which is equivalent to \Ptf\ \eqref{P34} (through a rescaling of the variables). This link between the Ermakov--Painlev\'e II equation \eqref{48} and \Ptf\ \eqref{P34} has been noted previously in the context of a Painlev\'e reduction of a classical Nernst--Planck electrodif\/fusion system in~\cite{pacr15}.
We remark that the special case of equation \eqref{49} with $c_3=0$ was considered by Gambier~\cite[pp.~27--28]{refGambier10}, who linearised the equation. Multiplying \eqref{49} with $c_3=0$ by $\rho$ and dif\/ferentiating gives
\begin{gather*} 
\deriv[3]{\rho}{\xx} = 4\xx\deriv{\rho}{\xx}+2\rho,
\end{gather*}
which has solution
\begin{gather} \label{49b}
 \rho(\xx)=C_1 \Ai^2(z)+C_2\Ai(z)\Bi(z)+C_3\Bi^2(z),\qquad z=-c_2^{1/3}\xx,
\end{gather}
with $C_1$, $C_2$ and $C_3$ constants. The solution $\rho(\xx)$ given by \eqref{49b} satisf\/ies \eqref{49} only if $c_3=0$, $\sigma=0$ and $4C_1C_2=C_3^2$.

In the sequel, it is convenient to proceed with
\begin{gather*} \label{50}
{{\cII}} = -\tfrac12 ,\qquad {{\cIII}} = -1 ,\qquad \sigma = -\tfrac14 \left( \a + \tfrac12 \right)^2, \end{gather*}
whence
\begin{gather} \label{51}
\bar{\ep} = \tfrac12\mathcal{C} \bar{\ga}^2 > 0 ,\qquad \la = \mathcal{C} \bar{\ga}^2 > 0 , \qquad
\left( \a + \tfrac12 \right)^2 = - \dfrac{4}{\la} \left[ \left( \dfrac{\mathcal{I}}{\bar{\ga}} \right)^2 + \mu \right],
\end{gather}
where the latter requires that $\mu<-\mathcal{C}\mathcal{I}^2/\la<0$. The Ermakov--Painlev\'e II equation \eqref{48} is then linked to \Ptf\ \eqref{P34}
via the relation $\rho=|\Psi|^2>0$.

The well-known connection, in turn, between \PII\ \eqref{P2} and \Ptf\ \eqref{P34} is readily derived via the Hamiltonian system
\begin{gather*} \label{53}
\deriv{q}{z} = \pderiv{\mathcal{H}_{\rm II}}{p} ,\qquad \deriv{p}{z} = - \pderiv{\mathcal{H}_{\rm II}}{q} , \end{gather*}
where the Hamiltonian $\mathcal{H}_{\rm II} ( p, q, z; \a )$ is given by
\begin{gather*} \label{54}
\mathcal{H}_{\rm II} ( p, q, z; \a ) = \tfrac12 p^2 - \big( q^2 + \tfrac12z \big) p - \big( \a + \tfrac12 \big) q, \end{gather*}
leading to the coupled pair of nonlinear equations
\begin{gather} \label{55}
\deriv{q}{z} = p - q^2 - \tfrac12z ,\qquad \deriv{p}{z} = 2 q p + \a + \tfrac12,\end{gather}
(see \cite{refJMi,refOkamotoPIIPIV}).
Elimination of $p$ and $q$ successively in \eqref{55} duly leads to the \PII\ \eqref{P2} and \Ptf\ \eqref{P34}. Thus, in the present capillarity context, the density distribution $\rho(\xx)$ is given by
\begin{gather*} \label{58}
\rho(\xx) = \deriv{w}{\xx} + w^2 + \tfrac12{\xx}, \end{gather*}
where $w(\xx)$ is governed by the \PII\ equation
\begin{gather*} \label{59}
\deriv[2]{w}{\xx} = 2w^3 + \xx w + \a.\end{gather*}

Here, the concern is necessarily restricted to solutions of \Ptf\ \eqref{P34} in regions in which $\rho$ is positive. Interestingly, the importance of positive solutions of \Ptf\ \eqref{P34} also arises naturally in the setting of two-ion electro-dif\/fusion. Thus, in the electrolytic context of \cite{lbjncrws10,ablbcr12}, the scaled electric f\/ield $Y$ was shown to be governed by the \PII\ equation
\begin{gather*} \label{60}
\deriv[2]{Y}{z}= 2Y^3 + z Y + \a, \end{gather*}
and associated ion concentrations by
\begin{gather*} \label{61}
p_\pm = \pm \deriv{Y}{z}+ Y^2 + \tfrac12z, \end{gather*}
with parameter
\begin{gather*} \label{62}
\a = \frac{1 - A_-/A_+}{2(1 + A_-/A_+)},\end{gather*}
and $A_\pm=-\Phi_\pm/D_\pm$, $\Phi_\pm$ being the f\/luxes of the ion concentrations and $D_\pm$ dif\/fusivity constants arising in the Einstein relation. Thus, it is seen that the ion concentrations, which are necessarily positive, are governed by \Ptf~\eqref{P34}. This positivity constraint was examined in detail in \cite{lbjncrws10} for exact solutions in terms of either Yablonskii-Vorob'ev polynomials or classical Airy functions as induced by the iterated action of the B\"acklund transformation of~\cite{nl71} for \PII~\eqref{P2}. The results apply `mutatis mutandis' in the present capillarity context.

\section{Iterated action of a B\"acklund transformation}

Here, the consequences of the following well-known B\"acklund transformation for \PII\ \eqref{P2} are applied in the present capillarity context.

\begin{Theorem}\label{thm41}
If $q_\a(z)=q(z;\a)$ is a solution of \PII\ \eqref{P2} with parameter $\a$, then
\begin{subequations}\label{64pm}\begin{gather} \label{64p}
q_{\a+1}(z) = -q_\a(z) - \frac{2\a + 1}{2q'_{\a}(z) + 2q^2_\a(z) + z}, \\
q_{\a-1}(z) = -q_\a(z) - \frac{2\a - 1}{2q'_{\a}(z) - 2q^2_\a(z) + z}, \label{64m}
\end{gather}\end{subequations}
are solution of \PII\ with respective parameters $\a+1$ and $\a-1$.
\end{Theorem}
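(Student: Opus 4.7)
\textbf{Proof proposal for Theorem \ref{thm41}.} The strategy is to exploit the Hamiltonian formulation \eqref{55} already used in the excerpt, which converts both identities into a finite sequence of differentiations and algebraic manipulations. First, observe that if $q_\a(z)$ satisfies \PII\ \eqref{P2}, then setting
\begin{gather*}
p_\a(z) = \dot q_\a + q_\a^2 + \tfrac12 z
\end{gather*}
(with $\dot{\;}= d/dz$ on this page) realises the first equation of \eqref{55}, while \PII\ itself is equivalent to the companion equation $\dot p_\a = 2 q_\a p_\a + \a + \tfrac12$. Since $2p_\a = 2\dot q_\a + 2q_\a^2 + z$, the transformation \eqref{64p} reads simply
\begin{gather*}
q_{\a+1} = -q_\a - \frac{\a+\tfrac12}{p_\a},
\end{gather*}
so the statement to prove is that, if the pair $(q_\a,p_\a)$ solves the Hamiltonian system at level $\a$, then $q_{\a+1}$ solves \PII\ at level $\a+1$.

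Second, I would differentiate the defining relation for $q_{\a+1}$, substitute the companion equation for $\dot p_\a$, and then use $p_\a = \dot q_\a + q_\a^2 + \tfrac12 z$ to eliminate $p_\a$ in favour of $q_\a$ and $\dot q_\a$. A short calculation collapses the result to the compact identity
\begin{gather*}
\dot q_{\a+1}+\dot q_\a = q_{\a+1}^{\,2}-q_\a^{\,2}.
\end{gather*}
Defining $p_{\a+1} := \dot q_{\a+1} + q_{\a+1}^{\,2} + \tfrac12 z$ (so that the first Hamilton equation at level $\a+1$ holds by construction), the displayed identity gives $p_{\a+1} = -p_\a + 2q_{\a+1}^{\,2} + z$. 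I would then verify the second Hamilton equation $\dot p_{\a+1} = 2q_{\a+1}p_{\a+1} + (\a+1) + \tfrac12$ by differentiating this expression, substituting the previously derived formulas for $\dot q_{\a+1}$ and $\dot p_\a$, and using the defining relation $2 p_\a(q_\a+q_{\a+1}) = -(2\a+1)$. Elimination of $p_{\a+1}$ between the two Hamilton equations at level $\a+1$ then reproduces \PII\ with parameter $\a+1$, proving \eqref{64p}.

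Third, for \eqref{64m} I would argue that it is the inverse of \eqref{64p}: given $q_\a$ and $q_{\a+1}$ related by \eqref{64p}, show that applying \eqref{64m} at parameter $\a+1$ returns $q_\a$. This reduces to the identity
\begin{gather*}
\bigl(2\dot q_{\a+1} - 2 q_{\a+1}^{\,2} + z\bigr)\bigl(2\dot q_\a + 2 q_\a^{\,2} + z\bigr) = (2\a+1)^2,
\end{gather*}
i.e.\ $2p_\a\cdot(-2p_{\a+1}+2z+2z\cdots)=(2\a+1)^2$, which follows at once from $p_{\a+1} = -p_\a + 2q_{\a+1}^{\,2} + z$ combined with $q_{\a+1}+q_\a = -(\a+\tfrac12)/p_\a$. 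Running the proof of \eqref{64p} in reverse then shows that \eqref{64m} produces a \PII\ solution at level $\a-1$.

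The main obstacle is not conceptual but bookkeeping: the direct computation of $\dot p_{\a+1}$ generates a dozen terms, and the verification of the second Hamilton equation at level $\a+1$ relies on two successive substitutions. I expect the cleanest route is to keep $p_\a$ as an independent symbol throughout, using only the two Hamilton equations at level $\a$ and the algebraic definitions of $q_{\a+1}$ and $p_{\a+1}$; the parameter-shift $\a\mapsto\a+1$ then emerges from the single scalar identity $2p_\a(q_\a+q_{\a+1})+(2\a+1)=0$. One must also record that the formulas degenerate where $p_\a=0$, i.e.\ at the zeros of the auxiliary rational variable, and that \eqref{64p}--\eqref{64m} are therefore to be understood as transformations of meromorphic solutions.
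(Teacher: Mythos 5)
The paper proves Theorem~\ref{thm41} only by citation to Gambier and Lukashevich, so the real question is whether your argument stands on its own; your route via the Hamiltonian system \eqref{55} is the natural one and is exactly the machinery the paper itself deploys in Theorem~\ref{thm42}. Your half of the proof concerning \eqref{64p} is sound: with $p_\a=q'_\a+q_\a^2+\tfrac12 z$ the pair $(q_\a,p_\a)$ satisfies \eqref{55}, the identity $q'_{\a+1}+q'_\a=q_{\a+1}^2-q_\a^2$ is correct, and the check of $p'_{\a+1}=2q_{\a+1}p_{\a+1}+\a+\tfrac32$ does collapse, as you say, onto the single relation $2p_\a(q_\a+q_{\a+1})=-(2\a+1)$. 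That part is complete modulo routine algebra.

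The treatment of \eqref{64m} contains two genuine gaps. First, the product identity you propose is false: for $q_0=0$, $q_1=-1/z$ both factors equal $z$, so the left-hand side is $z^2$ while the right-hand side is $1$. What the inversion argument actually needs is that the two denominators be \emph{equal}: if $q_\a+q_{\a+1}$ equals $-(2\a+1)$ divided by either denominator, then the denominators coincide, and indeed one computes $2q_{\a+1}^2-2q'_{\a+1}+z=2p_\a=2q'_\a+2q_\a^2+z$ from $p_{\a+1}=-p_\a+2q_{\a+1}^2+z$. Carrying this through forces the downward map to read $q_{\a-1}=-q_\a-(2\a-1)/\bigl(2q_\a^2-2q'_\a+z\bigr)$, i.e., the signs of the $q'$ and $q^2$ terms in \eqref{64m} as printed are interchanged; the printed version already fails on the rational solution $q_2=1/z-3z^2/(z^3+4)$ at $\a=2$, which it does not send to $-1/z$. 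Second, even once the inversion is established, ``\eqref{64m} undoes \eqref{64p}'' only shows that \eqref{64m} acts correctly on solutions lying in the image of \eqref{64p}; to conclude for an arbitrary solution at level $\a$ you need either a surjectivity argument or, far more economically, the symmetry $q(z;\a)\mapsto-q(z;-\a)$ of \PII\ (invoked by the paper just after \eqref{66}), under which the upward transformation conjugates directly to the downward one and the second half of the theorem follows from the first.
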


\begin{proof} See Gambier \cite{refGambier10} and Lukashevich \cite{nl71}.\end{proof}

The iteration of the B\"acklund transformations \eqref{64pm} allows the generation of all known exact solutions of \PII\ \eqref{P2}.

We note that eliminating $q'_{\a}(z)$ in \eqref{64pm} yields the nonlinear dif\/ference equation
\begin{gather*}
\frac{\a+\tfrac12}{q_{\a+1}+q_{\a}}+\frac{\a-\tfrac12}{q_{\a}+q_{\a-1}}+2q_{\a}^2+z=0,
\end{gather*}
which is known as an alternative form of discrete Painlev\'e I \cite{refFGR}.

\begin{Theorem}\label{thm42}
If $q_\a=q(z;\a)$ and $p_\a=p(z;\a)$ are solutions of \PII\ \eqref{P2} and \Ptf\ \eqref{P34} with parameter $\a$ respectively, then
\begin{gather*}
q_{\a+1} = -q_\a - \frac{2\a + 1}{2p_{\a}}, \\
q_{\a-1} = -q_\a + \frac{2\a - 1}{2p_{\a}- 4q^2_\a + 2z}, \\
p_{\a+1} = -p_\a +\left(q_{\a}+ \frac{2\a + 1}{2p_{\a}}\right)^2+z, \\
p_{\a-1} = -p_\a + 2q^2_\a +z.
\end{gather*}
\end{Theorem}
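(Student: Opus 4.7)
The plan is to translate the derivative-based B\"acklund relations of Theorem~\ref{thm41} into $(q_\alpha,p_\alpha)$-form by using the coupled Hamiltonian system \eqref{55} as a dictionary between $q'_\alpha$, $p'_\alpha$ and $q_\alpha$, $p_\alpha$.

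First, I would exploit the first Hamilton equation to rewrite the denominators that appear in \eqref{64p} and \eqref{64m}. Specifically, $dq_\alpha/dz=p_\alpha-q_\alpha^2-z/2$ gives the elementary identities
\begin{gather*}
2q'_\alpha+2q_\alpha^2+z=2p_\alpha,\qquad 2q'_\alpha-2q_\alpha^2+z=2p_\alpha-4q_\alpha^2,
\end{gather*}
and substituting these into \eqref{64pm} produces the stated expressions for $q_{\alpha+1}$ and $q_{\alpha-1}$ directly in terms of $q_\alpha$ and $p_\alpha$ (up to a tracking of signs).

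Next, to derive the $p_{\alpha\pm1}$ formulas, I would use the identity $p_\beta=q'_\beta+q_\beta^2+z/2$ (valid for any $\beta$ by \eqref{55}) at both $\beta=\alpha$ and $\beta=\alpha\pm1$, and then form the sum
\begin{gather*}
p_{\alpha\pm1}+p_\alpha=(q'_{\alpha\pm1}+q'_\alpha)+(q_{\alpha\pm1}^2+q_\alpha^2)+z.
\end{gather*}
The derivative combination $q'_{\alpha\pm1}+q'_\alpha$ can be computed by differentiating the $q$-relations already obtained in the first step and then invoking the second Hamilton equation $p'_\alpha=2q_\alpha p_\alpha+\alpha+\tfrac12$ to eliminate $p'_\alpha$. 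After expansion, the cross terms should reorganise against $q_{\alpha\pm1}^2+q_\alpha^2$ to yield the stated right-hand sides, and the fourth relation can then be double-checked as the $\alpha\mapsto\alpha-1$ shift of the third, giving an internal consistency check against the first two formulas.

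The hard part will be nothing conceptual but purely the algebraic bookkeeping---in particular, carefully combining the expansion of $\bigl(q_\alpha+(2\alpha+1)/(2p_\alpha)\bigr)^2$ with the derivative contribution so that the surviving terms match the claimed formulas. No new ingredient beyond \eqref{55} and Theorem~\ref{thm41} should be required.
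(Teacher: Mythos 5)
The paper itself contains no derivation of Theorem~\ref{thm42}---its ``proof'' is a pointer to Okamoto \cite{refOkamotoPIIPIV} and Forrester--Witte \cite{refFW01}---so your plan of deducing it from Theorem~\ref{thm41} together with the Hamiltonian system \eqref{55} is a genuinely different, self-contained route, and it is the right one. The identity $2q'_\a+2q_\a^2+z=2p_\a$ converts \eqref{64p} into the first formula immediately, and writing $p_\beta=q'_\beta+q_\beta^2+\tfrac12z$ at $\beta=\a$ and $\beta=\a\pm1$, then eliminating derivatives via \eqref{55}, yields the Toda-type relations for $p_{\a\pm1}$. What the citation buys is brevity; what your computation buys is a proof inside the paper's own framework, and it reproduces the first and fourth printed formulas exactly.

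The caveat is that your computation does \emph{not} land on the printed second and third formulas, and the discrepancy is more than ``a tracking of signs'', so you cannot wave it away. Substituting $q'_\a=p_\a-q_\a^2-\tfrac12z$ into \eqref{64m} gives
\begin{gather*}
q_{\a-1}=-q_\a-\frac{2\a-1}{2p_\a-4q_\a^2},
\end{gather*}
with no $+2z$ in the denominator and the opposite sign on the fraction, while the Toda step gives
\begin{gather*}
p_{\a+1}=-p_\a+2\left(q_\a+\frac{2\a+1}{2p_\a}\right)^2+z,
\end{gather*}
with a factor $2$ on the square that is absent from the statement. These corrected versions are forced, not a matter of convention: for $\a=1$ one has $q_1=-1/z$ and $p_1=\tfrac12z+2z^{-2}$, and the printed second formula returns $4/(3z)$, which does not solve \eqref{P2} with $\a=0$, whereas your substitution returns the correct $q_0=0$; similarly the printed third formula fails against the rational solution $p_2=Q_3Q_1/(2Q_2^2)$ while the version with the factor $2$ reproduces it. So when you carry out the algebra you will find yourself unable to reconcile a correct derivation with the identities as displayed; you must either flag these as errors in the statement or your write-up will appear to contain a mistake that is not yours.
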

\begin{proof}See Okamoto \cite{refOkamotoPIIPIV}; also \cite{refFW01}.\end{proof}

\subsection{Rational solutions}
The iterative action of the above B\"acklund transformation on the seed solution $q=0$ of \PII\ with $\a=0$ produces the subsequent sequence of rational solutions
\begin{gather} \label{65}
q_{n}(z) = \deriv{}{z} \ln \frac{Q_{n-1}(z)}{Q_{n}(z)},\qquad n \in\N, \end{gather}
corresponding to the Painlev\'e parameters $\a=n$, for $n\in\N$, where the $Q_{n}(z)$ are the \textit{Yablonskii--Vorob'ev polynomials} determined by the quadratic recurrence relations
\begin{gather} \label{66}
Q_{n+1} Q_{n-1} = z Q^2_{n} + 4\left\{\left(\deriv{Q_{n}}{z}\right)^2 - Q_{n} \deriv[2]{Q_{n}}{z}\right\} ,
\end{gather}
with $Q_{-1}(z) = Q_{0}(z) = 1$\ \cite{av65, ay59}; see also \cite{pc03+,pc03,pcem03,refKametaka86,refKanOch,refTaneda00}. The $Q_{n}(z)$ are monic polynomials of degree $\tfrac12n(n+1)$ with each term possessing the same degree modulo~$3$. Moreover, on use of the invariance under
\begin{gather*}q(z,\a)\rightarrow-q(z;-\a),\end{gather*}
it is seen that \PII\ \eqref{P2} also admits the associated class of rational solutions
\begin{gather} \label{67}
q_{-n}(z) = \deriv{}{z} \ln \frac{ Q_{n}(z)}{Q_{n-1}(z)} ,\qquad n \in\N, \end{gather}
corresponding to the Painlev\'e parameters $\a=-n$, for $n\in\N$. The rational solutions of~\Ptf\ \eqref{P34} are given by
\begin{gather*}
p_{n}(z) = \tfrac12z- 2\deriv[2]{}{z} \ln {Q_{n}(z)}\equiv \frac{Q_{n+1}(z)Q_{n-1}(z)}{2Q_{n}^2(z)},\qquad n \in\N, \end{gather*}
corresponding to the parameters $\a=n$, with $n\in\N$.

It is clear from the recurrence relation \eqref{66} that the $Q_{n}(z)$ are rational functions, though it is not obvious that they are polynomials since one is dividing by $Q_{n-1}(z)$ at every iteration. In fact it is somewhat remarkable that the $Q_{n}(z)$ are polynomials. Taneda \cite{refTaneda00}, used an algebraic method to prove that the functions $Q_{n}(z)$ def\/ined by~\eqref{66} are indeed polynomials, see also~\cite{refFOU}. The Yablonskii--Vorob'ev polynomials $Q_{n}(z)$ can also be expressed as determinants, see \cite{pcem03,refKMi,refKO96}. Clarkson and Mansf\/ield~\cite{pcem03} investigated the locations of the roots of the Yablonskii--Vorob'ev polynomials in the complex plane and showed that these roots have a very regular, approximately triangular structure; the term ``approximate" is used since the patterns are not exact triangles as the roots lie on arcs rather than straight lines. Recently Bertola and Bothner~\cite{refBB15} and Buckingham and Miller~\cite{refBM14,refBM15} have studied the Yablonskii--Vorob'ev polynomials $Q_{n}(z)$ in the limit as $n\to\infty$ and shown that the roots lie in a~``triangular region'' with elliptic sides which meet with interior angle~$\tfrac25\pi$, suggesting a limit to a solution of Painlev\'e~I~(\PI). Indeed Buckingham and Miller \cite{refBM15} show that in the limit as $n\to\infty$, the rational solution $q_{n}(z)$ of \PII\ tends to the \textit{tritronqu\'ee solution} of \PI\ due to Boutroux \cite{boutroux}, which no poles (of large modulus) except in one sector of angle $\tfrac25\pi$\ (see also \cite{refJK}).

In the sequel, attention is restricted to the case with invariants $\mathcal{I}=\mathcal{I}_{n}\neq0$ so that similarity reduction of the capillarity system via \eqref{2-3} leads to consideration of a Ermakov--Painlev\'e II equation in the amplitude $|\Psi|$ and, in turn, density $\rho$ as determined by \Ptf\ \eqref{49}. Thus, the density distribution $\rho=\rho_+$ associated with the class of exact solutions \eqref{65} of \PII\ \eqref{P2} in terms of Yablonskii--Vorob'ev polynomials with $z=\xx$ may be shown to be given by rational expressions derived via \eqref{48} to adopt the form (see \cite{pc03+,pc03,pcem03})
\begin{gather} \label{68}
\rho_+(\xx;n) = \frac{Q_{n+1}(\xx) Q_{n-1}(\xx)}{2Q^2_{n}(\xx)} ,\qquad \rho_+(\xx;0) = \tfrac12{\xx},
 \end{gather}
as subsequently employed in the two-ion electolytic boundary value problems investigated in~\cite{lbjncrws10}. Therein, the positivity of members of the class of rational solutions \eqref{68} of \Ptf\ on appropriate regions has been delimited. In the present capillarity context, in such regions where $\rho=\rho_+$ is positive, the class of solutions \eqref{68} is associated with wave packet representations~\eqref{2-3} with
\begin{subequations} \label{69}\begin{gather}
\phi(\xi) = \pm \frac{V(\xi)}{\sqrt{1+V^2(\xi)}} \sqrt{\dfrac{Q_{n+1}(\xi) Q_{n-1}(\xi)}{2Q^2_{n}(\xi)}} ,\\
\psi(\xi) = \pm \frac{1}{\sqrt{1+V^2(\xi)}}\sqrt{\dfrac{Q_{n+1}(\xi) Q_{n-1}(\xi)}{2Q^2_{n}(\xi)}},
\end{gather}\end{subequations}
where $V(\xi)$ is given by, in view of \eqref{27},
\begin{gather}
\bar{\ga}^2 \tan^{-1} V(\xi) = \left( \bar{\a} - \frac{\bar{\delta}}{\bar{\ep}} \right) \xi + 2\mathcal{I}_{n} \int^{\xi} \frac{Q^2_{n}(\xii)}{Q_{n+1}(\xii) Q_{n-1}(\xii)}\,\d \xii \nonumber\\
\hphantom{\bar{\ga}^2 \tan^{-1} V(\xi)}{} = \left( \bar{\a} - \frac{\bar{\delta}}{\bar{\ep}} \right) \xi + \frac{2\mathcal{I}_{n}}{2n+1}\ln \frac{Q_{n+1}(\xi)}{Q_{n-1}(\xi)},
\label{70}\end{gather}
since
\begin{gather} \label{70a}
\int^\xi \frac{Q^2_{n}(\xii)}{Q_{n+1}(\xii) Q_{n-1}(\xii)}\,\d \xii = \frac{1}{2n+1}\ln \frac{Q_{n+1}(\xi)}{Q_{n-1}(\xi)}.
\end{gather}
This result \eqref{70a} follows since the Yablonskii--Vorob'ev polynomials $Q_{n}(\xi)$ satisfy the bilinear relation
\begin{gather*} 
\deriv{Q_{n+1}}{\xi}Q_{n-1}-\deriv{Q_{n-1}}{\xi}Q_{n+1}=(2n+1)Q_{n}^2,
\end{gather*}
which is proved in \cite{refFOU,refTaneda00} (see also \cite{refKanOch}).
Likewise, there are associated classes of wave packet representations corresponding to the rational solution $q_{-n}(z)$ given by \eqref{67} which are determined by the relations \eqref{69}, \eqref{70} but with the transposition $n\leftrightarrow n-1$.

\subsection{Airy-type solutions}
The iterated action of the B\"acklund transformations \eqref{64pm} may, in addition, be used to ge\-nerate exact solutions of \PII\ \eqref{P2} with parameters $\a=\pm\tfrac12, \pm\tfrac32, \ldots$ in terms of classical Airy functions~\cite{pc16}. Thus, in particular, if $\a=\tfrac12$ then \PII\ \eqref{P2} admits the exact solution
\begin{gather} \label{71}
q\big(z; \tfrac12\big) = - \deriv{}{z} \ln \ph(z),\end{gather}
where $\ph(z)$ is governed by the classical Airy function
\begin{gather} \label{72}
\deriv[2]{\ph}{z} + \tfrac12 z \ph = 0.\end{gather}
Iteration of the B\"acklund transformation \eqref{64p} with the Airy-type seed solution \eqref{71} generates as inf\/inite sequence of exact solutions
\begin{gather} \label{73}
q\big(z;n-\tfrac12\big) = \deriv{}{z} \ln \frac{u_{n-1}(z)}{u_{n}(z)} ,\qquad n \in \N, \end{gather}
where the sequence $\{u_\ell(z)\}$, for $\ell\geq0$, is determined by the recurrence relation (Toda equation)
\begin{subequations}\label{74-75} \begin{gather} \label{74}
u_{n+1} u_{n-1} = 4 \left\{\left( \deriv{u_{n}}{z}\right)^2 - u_{n} \deriv[2]{u_{n}}{z}\right\}, \end{gather}
with initial values
\begin{gather} \label{75}
u_{0}(z) = 1 ,\qquad u_1(z) = \ph(z).\end{gather}\end{subequations}
The $u_{n}(z)$, for $n\geq2$, are homogeneous polynomials of degree $n$ in $\ph(z)$ and $\ph'(z)$, i.e., have the form
\begin{gather*}
u_{n}(z) = \sum_{j=0}^{n}a_{n,j} \ph^j\left(\deriv{\ph}{z}\right)^{n-j},
\end{gather*}
where $a_{n,j}(z)$ are polynomials in $z$
and $\ph(z)$ is the solution of \eqref{72} given by
\begin{gather} \label{76}
\ph(z) = \cos(\th) \Ai \big({-}2^{-1/3} z\big) + \sin(\th) \Bi \big({-}2^{-1/3} z\big),
\end{gather}
with $\Ai(\zz)$ and $\Bi(\zz)$ the Airy functions and $\th$ an arbitrary constant.

The analogous Airy-type solutions of \Ptf\ \eqref{P34} are given by
\begin{gather*} 
p\big(z;n-\tfrac12\big) = - 2\deriv[2]{}{z} \ln {u_{n}} \equiv \frac{u_{n-1}u_{n+1}}{2u_{n}^2},\qquad n \in\N, \end{gather*}
for the parameter $\a=n-\tfrac12$.

The Airy-type solutions of \PII\ \eqref{P2} and \Ptf\ \eqref{P34} can also be expressed in terms of determinants, as described in the following theorem.
\begin{Theorem}\label{thm43}Let $\tau_{n}(z)$ be the Hankel $n\times n$ determinant
\begin{gather*}
\tau_{n}(z) = \left[\deriv[j+k]{}{z}\varphi(z) \right]_{j,k=0}^{n-1},\qquad n\geq1,
\end{gather*}
with $\varphi(z)$ given by \eqref{76} and $\tau_0(z)=1$, then for $n\geq1$,
\begin{gather} 
q\big(z;n-\tfrac12\big)= \deriv{}{z}\ln\frac{\tau_{n-1}(z)}{\tau_{n}(z)},\qquad
\label{pnAiry} p\big(z;n-\tfrac12\big) = - 2\deriv[2]{}{z} \ln {\tau_{n}(z)},
\end{gather}
respectively satisfy \PII\ \eqref{P2} and \Ptf\ \eqref{P34} with $\a=n-\tfrac12$.\end{Theorem}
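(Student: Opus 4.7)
The plan is to identify the Hankel determinant $\tau_n(z)$ with the Toda--lattice polynomial $u_n(z)$ of \eqref{74}--\eqref{75} up to a $z$-independent multiplicative constant (which is irrelevant in the logarithmic formulas), and then to read off the claim from \eqref{73} and the Airy-type $p$-representation displayed just above the theorem. The base cases are immediate: $\tau_0=1=u_0$ by convention, and $\tau_1(z)=\varphi(z)=u_1(z)$ by the initial data in \eqref{75}.

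The crux is the classical bilinear identity
\[
\tau_n\,\deriv[2]{\tau_n}{z}-\left(\deriv{\tau_n}{z}\right)^2=\tau_{n+1}\,\tau_{n-1},
\]
for Hankel determinants built from successive derivatives of a single function. I would derive it by applying the Desnanot--Jacobi (Dodgson condensation) relation to the $(n{+}1)\times(n{+}1)$ Hankel matrix with entries $\varphi^{(j+k)}$, $j,k=0,\ldots,n$, combined with the Hankel differentiation rule: since $\partial_z\varphi^{(j+k)}=\varphi^{(j+k+1)}$, differentiating any row of $\tau_n$ other than the last produces a row-duplicate and contributes nothing, so only the bottom-row shift survives and $\tau_n'$ (and, on iterating, $\tau_n''$) can be rewritten as specific Hankel-like minors. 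Elementary column operations on the shifted-Hankel minors appearing on the right-hand side of Desnanot--Jacobi reduce them to these same expressions, yielding the displayed identity. No property of $\varphi$ beyond smoothness is required at this step; the Airy equation \eqref{72} plays no role.

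Comparing with the recurrence \eqref{74} for $u_n$, the two bilinear relations differ only by an overall sign and a factor of $4$, forcing $\tau_n=c_n u_n$ for $z$-independent constants $c_n$ determined by $c_{n+1}c_{n-1}=-\tfrac14 c_n^2$ with $c_0=c_1=1$. Because the $c_n$ do not depend on $z$,
\[
\deriv{}{z}\ln\frac{\tau_{n-1}}{\tau_n}=\deriv{}{z}\ln\frac{u_{n-1}}{u_n},\qquad -2\deriv[2]{}{z}\ln\tau_n=-2\deriv[2]{}{z}\ln u_n,
\]
and \eqref{73} together with the representation $p(z;n-\tfrac12)=-2(\ln u_n)''$ immediately translate into the claimed expressions \eqref{pnAiry}. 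That they satisfy \PII\ \eqref{P2} and \Ptf\ \eqref{P34} with $\a=n-\tfrac12$ is then guaranteed by the B\"acklund iteration recorded in Theorem~\ref{thm41}.

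The main obstacle is the determinantal bookkeeping. Desnanot--Jacobi is textbook, but matching the four boundary minors and the interior minor of the enlarged Hankel matrix with the correct combinations of $\tau_{n\pm 1}$ and the derivatives $\tau_n'$, $\tau_n''$ requires careful use of the Hankel symmetry and the row/column-differentiation collapse described above. Once this matching is carried out cleanly, the remainder of the argument is purely algebraic and reduces to the verification of the constants $c_n$.
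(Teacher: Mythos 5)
The paper offers no argument for this theorem beyond a citation to Flaschka--Newell and Okamoto, so your proposal is genuinely different in kind: you actually reduce the determinantal formulas to the Toda-recurrence representation \eqref{73}--\eqref{74-75} already displayed in the text, by proving the bilinear identity $\tau_{n}\tau_{n}''-(\tau_{n}')^{2}=\tau_{n+1}\tau_{n-1}$ for the Hankel determinants and matching constants. This is a correct and self-contained strategy (modulo the fact that \eqref{73}--\eqref{75} and the $u_{n}$-formula for $p$ are themselves only asserted in the paper as consequences of iterating the B\"acklund transformation of Theorem~\ref{thm41}); what it buys over the paper's citation is an explicit verification that the Hankel $\tau_{n}$ and the Toda polynomials $u_{n}$ agree up to the constants $c_{n}$ with $c_{n+1}c_{n-1}=-\tfrac14 c_{n}^{2}$, which indeed drop out of both logarithmic-derivative formulas. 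Your derivative-collapse observation and the sign/factor bookkeeping against \eqref{74} are both right.

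One concrete repair is needed in the determinantal step. The corner (Dodgson) form of Desnanot--Jacobi applied to the $(n{+}1)\times(n{+}1)$ Hankel matrix has as its interior minor the \emph{shifted} determinant $\det\big(\varphi^{(j+k+2)}\big)_{j,k=0}^{n-2}$, which is not $\tau_{n-1}$, and no column operations will convert one into the other (a Hankel minor with rows $R$, columns $C$ coincides with another only under a uniform shift $R\mapsto R+t$, $C\mapsto C-t$). Instead use the general Jacobi minor identity with the \emph{last two} rows and the \emph{last two} columns deleted: the doubly-deleted minor is then exactly $\tau_{n-1}$; deleting row $n$ and column $n$ gives $\tau_{n}$; deleting row $n{-}1$, column $n$ (or row $n$, column $n{-}1$, equal by Hankel symmetry) gives the minor with rows $0,\dots,n{-}2,n$, which your row-differentiation collapse identifies as $\tau_{n}'$; and deleting row $n{-}1$, column $n{-}1$ gives the minor with rows and columns $0,\dots,n{-}2,n$, which is obtained from $\tau_{n}'$ by the analogous \emph{column}-differentiation collapse and hence equals $\tau_{n}''$. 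The identity then reads off directly, and the rest of your argument goes through unchanged.
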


\begin{proof}See Flaschka and Newell \cite{refFN}, Okamoto \cite{refOkamotoPIIPIV}; also \cite{pc16,refFW01}.\end{proof}

We remark that recently it was shown that Airy-type solutions of \PII\ \eqref{P2} and \Ptf~\eqref{P34} which depend only on the Airy function $\Ai(\zz)$ have a completely dif\/ferent structure to those which involve a linear combination of the Airy functions $\Ai(\zz)$ and $\Bi(\zz)$, see~\cite{pc16}. In particular, for $n\in2\N$ the solution \eqref{pnAiry} of \Ptf\ \eqref{P34} has no poles on the real axis when $\varphi(z)=\Ai(-2^{-1/3} z)$ and decays algebraically as $z\to\pm\infty$. This special solution arose in a study of the double scaling limit of unitary random matrix ensembles by Its, Kuijlaars, and \"{O}stensson \cite{refIKO08,refIKO09}, who identify the solution as a \textit{tronqu\'{e}e} solution of \Ptf, i.e., has no poles in a sector of the complex plane (see also~\cite{pc16}).

The iterative application of the B\"acklund transformations \eqref{64pm} to generate Airy-type solutions of \PII\ \eqref{P2} has been used to solve boundary value problems associated with the classical Nernst--Planck system for two-ion electro-dif\/fusion \cite{lbjncrws10,crabws99}. The repeated action of the B\"acklund transformations in this electrolytic setting has been recently associated with quantised f\/luxes of ionic species in \cite{ablbcr12}. It is remarked that the Painlev\'e structure underlying a multi-ion electrodif\/fusion model has been systematically investigated in~\cite{rccrws07}.

In the present capillarity context, the density distributions associated with the class of exact solutions of \PII\ \eqref{P2} determined by the relations \eqref{73}, \eqref{74-75} are given by \cite{lbjncrws10}
\begin{gather*} \label{77}
\rho\big(\xi;n - \tfrac12 \big) = \frac{u_{n-1}(\xi) u_{n+1}(\xi)}{2u^2_{n}(\xi)} ,\qquad \rho \big(\xi;-\tfrac12\big) = 0,
\end{gather*}
where $u_{n}(z)$ is given by \eqref{74-75}. The physical requirement of positivity of these solutions corresponding to the specialisation $\ph(z)= \Ai(-2^{-1/3}z)$ in \eqref{76}, i.e., when $\th=0$, has been examined in detail in the electrolytic context of~\cite{lbjncrws10}. The implications of the results carry over to the present capillarity study.

In general, with regard to the velocity magnitude $v=|\bf{v}|$, alignment of the Madelung transformation \eqref{35} with \eqref{2-3} produces the velocity potential relation
\begin{gather} \label{79}
\Phi = 2\tan^{-1} \left( \frac{\psi + \phi \tan \eta}{\phi - \psi \tan \eta} \right) = 2 \left[ \tan^{-1} \left( \frac{\psi}{\phi} \right) + \eta \right],\end{gather}
on use of the identity
\begin{gather*} 
\tan^{-1} \left( \frac{x + y}{1 - x y} \right) \equiv \tan^{-1} x + \tan^{-1} y.\end{gather*}
Integration of the invariant relation \eqref{9} where in the present context $c=0$, yields
\begin{gather*}
-\bar{\ga}^2 \tan^{-1} \left( \frac{\psi}{\phi} \right)
= \left( \bar{\a} - \frac{\bar{\delta}}{\bar{\ep}} \right) \xi +2\mathcal{I} \int^\xi \frac{u_{n}^2(\xii)}{u_{n+1}(\xii)u_{n-1}(\xii)}\,\d \xii
= \left( \bar{\a} - \frac{\bar{\delta}}{\bar{\ep}} \right) \xi + \frac{\mathcal{I}}{n} \ln\frac{u_{n+1}(\xi)}{u_{n-1}(\xi)},
\end{gather*}
since
\begin{gather} \label{81a}
\int^\xi \frac{u^2_{n}(\xii)}{u_{n+1}(\xii) u_{n-1}(\xii)}\,\d \xii = \frac{1}{2n}\ln \frac{u_{n+1}(\xi)}{u_{n-1}(\xi)}.
\end{gather}
The result \eqref{81a} holds as the $u_{n}(\xi)$ satisfy the bilinear relation
\begin{gather*} 
\deriv{u_{n+1}}{\xi}u_{n-1}-\deriv{u_{n-1}}{\xi}u_{n+1}=2nu_{n}^2,
\end{gather*}
which follows from Theorem \ref{thm42}.
Consequently \eqref{79} yields
\begin{gather*}
|\mathbf{v}| = -\frac{2}{\bar{\ga}^2} \left[ \bar{\ga} \pderiv{}{\xi} + \left( \bar{\ep} \bar{\ga} t + \bar{\la} \right) \pderiv{}{\eta} \right] \left[ \left( \bar{\a} - \frac{\bar{\delta}}{\bar{\ep}} \right) \xi +{\mathcal{I}}\int \frac{1}{|\Psi|^2}\,\d \xi + \eta \right], \end{gather*}
whence,
\begin{gather*} \label{82}
v = - \frac{2\mathcal{I}}{\bar{\ga} \rho (\xi)} + 2 \bar{\ep} \bar{\ga} t - \frac{\bar{\a}}{\bar{\ga}},\end{gather*}
on use of the relation \eqref{13}. Insertion of the latter expression into the momentum equation of the capillarity system, namely
\begin{gather*} \label{83}
v_t + vv_x + \left[ \frac{\delta}{\delta \rho} (\rho \mathcal{E}) \right]_x = 0, \end{gather*}
on integration, produces the Bernoulli integral
\begin{gather*} \label{84}
\frac{2\mathcal{I}^2}{\bar{\ga}^2 \rho^2} + 2 \bar{\ep} \bar{\ga} x + \frac{\delta}{\delta \rho} (\rho \mathcal{E}) = \mathcal{B}(t).\end{gather*}
Here \eqref{43} shows that
\begin{gather*} \label{85}
\rho \mathcal{E} = \mathcal{C} \frac{ \bar{\ga}^2}{2\rho} \rho^2_x + \la \rho^2 + \frac{2\mu}{\rho} + 2\nu \rho + \tau , \end{gather*}
whence,
\begin{gather*}
\frac{\delta}{\delta \rho} (\rho \mathcal{E}) = \pderiv{}{\rho} (\rho \mathcal{E}) - \mathcal{C} \pderiv{}{x} \left( \frac{\rho_x}{\rho} \right) = 2 \la \rho + \frac{2\mu}{\rho^2} + 2\nu + \mathcal{C} \left(\frac{\rho^2_x}{2\rho^2} - \frac{\rho_{xx}}{\rho} \right), \end{gather*}
in which $\rho(x,t)=R(\xi)$ with $\xi=\bar{\a}t+\bar{\b}t^2+\bar{\ga}x$ so that $\rho_x=\bar{\ga}R'(\xi)$. Thus,
\begin{gather*} \label{86}
\frac{2\mathcal{I}^2}{\bar{\ga}^2 R^2} + 2 \bar{\ep} \bar{\ga} x + 2 \la R + \frac{2\mu}{R^2} + 2\nu + \mathcal{C} \bar{\ga}^2 \left[ \frac{1}{2R^2} \left(\deriv{R}{\xi}\right)^2- \frac{1}{R} \deriv[2]{R}{\xi}\right] = \mathcal{B}(t), \end{gather*}
and with $\mathcal{B}(t)=-2\bar{\ep}(\bar{\a}t+\bar{\b}t^2)$, an equation equivalent to \Ptf\ \eqref{P34} for the density $\rho(\xx)$ is retrieved, namely
\begin{gather*} 
\deriv[2]{\rho}{\xx} = \dfrac{1}{2\rho}\left(\deriv{\rho}{\xx}\right)^2 + \dfrac{2 \la\rho^2}{\mathcal{C} \bar{\ga}^2} + \dfrac{2 \bar{\ep} \xx\rho}{ \mathcal{C} \bar{\ga}^2} + \dfrac{2(\mathcal{I}^2+\gamma^2\mu)}{\mathcal{C} \bar{\ga}^4 \rho}, \qquad \xx=\xi+\dfrac{{{\cI}}+{{\cV}}}{{{\cII}}} ,
\end{gather*}
which aligns with \eqref{49} in view of the relations \eqref{44}, \eqref{47} and \eqref{51}. Thus, it is seen that, remarkably, in the present context \Ptf\ is associated with the density $\rho$ corresponding to a~symmetry reduction of the Bernoulli integral of motion of the capillarity system.

\section{Invariance under a reciprocal transformation}

The application of reciprocal-type transformations to $(1+1)$-dimensional nonlinear physical systems has its origin in the isolation of novel invariance properties in gasdynamics and magnetogasdynamics~\cite{cr68,cr69}. They have been subsequently applied to both obtain analytic solution to moving boundary problems of Stefan-type \cite{cr86} and to link integrable systems of modern soliton theory (see, e.g.,~\cite{addhah02, wocr93} together with~\cite{crws02} and work cited therein). In $3+1$ dimensions, reciprocal-type transformations have been shown to have application in discontinuity wave propagation theory~\cite{adurcr92}.

In the present capillarity context, the $(1+1)$-dimensional version of the system \eqref{29-30} with $\Pi=0$, namely
\begin{gather*} 
\rho_t + (\rho v)_x = 0 , \qquad v_t + vv_x + \left[ \dfrac{\delta}{\delta \rho} (\rho \mathcal{E}) \right]_x = 0,
\end{gather*}
was recently shown in \cite{crws14} to be invariant under the one-parameter $(\chi)$ class of reciprocal-type transformations
\begin{gather*} 
\rho^* = \dfrac{\rho}{1 + \chi \rho} ,\qquad q^* = q ,\qquad \mathcal{E}^* (\mathcal{A}^*, \rho^*) = \mathcal{E} (\mathcal{A}, \rho) ,
\qquad \mathcal{A}^* = \dfrac{\mathcal{A}}{(1 + \chi \rho)^6} , \\
\d x^* = (1 + \chi \rho)\,\d x - \chi \rho q\,\d t ,\qquad\d t^* =\,\d t , \qquad 0 < | 1 + \chi \rho | < \infty,
\end{gather*}
where $\mathcal{A}=\tfrac12\rho^2_x$. A direct corollary of this result is that the $(1+1)$-dimensional Korteweg-type capillarity system
\begin{subequations} \label{90}
\begin{gather}
\rho_t + (\rho v)_x = 0 , \\
v_t + vv_x + \left( - \kappa(\rho) \rho_{xx} - \tfrac12 \deriv{\kappa}{\rho} \rho^2_x + \deriv{\mathcal{R}}{\rho} \right)_x = 0,
\end{gather} \end{subequations}
is invariant under the one-parameter class of reciprocal transformations
\begin{gather*} 
\d x^* = (1 + \chi \rho)\,\d x - \chi \rho q\,\d t ,\qquad\,\d t^* =\d t,\qquad \rho^* = \dfrac{\rho}{1 + \chi \rho} ,\qquad q^* = q,
\end{gather*}
augmented by the relations
\begin{gather*} \label{92}
\kappa^* = (1 + \chi \rho)^5 \kappa ,\qquad \mathcal{R}^* = \frac{\mathcal{R}}{1 + \chi \rho}.\end{gather*}
This invariant transformation may be applied to seed solutions of the capillarity system \eqref{90} as previously determined in terms of Yablonskii--Vorob'ev polynomials or classical Airy functions to construct extended $\chi$-dependent classes of exact solutions valid for model energy laws $\mathcal{E}^*$ with $\chi$-deformed K\'{a}rm\'{a}n--Tsien capillarity relation
\begin{gather*} \label{93}
\kappa^* = \frac{\mathcal{C}}{\rho^* (1 - \chi \rho^*)^4} , \end{gather*}
together with
\begin{gather*} \label{94}
\mathcal{R}^* = \frac{\la \rho^{*2}}{1 - \chi \rho^*} - 2 \mu \frac{(1 - \chi \rho^*)^2}{\rho^*} + 2 \nu \rho^* + \tau (1 - \chi \rho^*).\end{gather*}
The original $\kappa(\rho)$, $\mathcal{R}(\rho)$ associated with the \Ptf\ reduction are retrieved in the limit $\chi\rightarrow0$.

\section[General perspectives on model laws in continuum mechanics and solitonic connections: conclusion]{General perspectives on model laws in continuum mechanics\\ and solitonic connections: conclusion}

Here, model energy laws have been isolated for which a Korteweg-type capillarity system admits symmetry reduction to a integrable hybrid Ermakov--Painlev\'e equation. A B\"{a}cklund and reciprocal transformation have been used in turn to generate novel classes of exact solutions and to extend the range of the reduction. The derivation of multi-parameter model constitutive laws for which systems in nonlinear continuum mechanics become analytically tractable via the application of B\"{a}cklund or reciprocal transformations has an extensive literature. Thus, in gasdynamics, Loewner \cite{cl50,cl52} applied matrix B\"acklund transformations to construct model constitutive laws for which the classical hodograph equations may be systematically reduced to appropriate tractable canonical forms in subsonic, transonic and supersonic f\/low r\'{e}gimes. The celebrated K\'{a}rm\'{a}n--Tsien two-parameter pressure-density model law of \cite{ht39} as extensively applied in subsonic gasdynamics, arises as a particular reduction. The B\"acklund transformations as introduced in the model gas law context of \cite{cl52}, suitably interpreted and extended, remarkably, turn out to have application in $(2+1)$-dimensional soliton theory \cite{bkcr91,bkcr93}. In nonlinear elastodynamics, model multi-parameter stress-strain laws were constructed in \cite{hcev74} which allow the analytic treatment of aspects of shock-less pulse propagation in bounded nonlinear elastic media. Comparison of experimental stress-strain relations with such model laws was investigated, in particular, for the dynamic compression of saturated soil, dry sand and clay silt. A~B\"acklund transformation may be introduced at the level of the stress-strain laws for the uniaxial Lagrangian elastodynamic system treated in~\cite{hcev74}. The single action of this B\"acklund transformation to the classical Hooke's law generates the multi-parameter class of $(T, e)$ laws applied extensively therein. Moreover application of a nonlinear superposition principle associated with the B\"acklund transformation permits the construction of more general model nonlinear $(T, e)$-laws for which the $(1+1)$-dimensional elastodynamic system may be iteratively reduced to that associated with the canonical Hooke's law. There is again a remarkable solitonic connection in that the nonlinear superposition principle acting on the $(T, e)$-laws turns out to be nothing but the permutability theorem for the potential Korteweg--de~Vries hierarchy (see, e.g.,~\cite{crws02}).

In nonlinear elastostatics, model stress-deformation laws have been introduced by Neuber in~\cite{hn58,hn61} in connection with the problem of determining the stress-distribution in shear-strained isotropic prismatical bodies. Loewner-type B\"acklund transformations were applied in~\cite{dccr75} to the Neuber elastostatic system to solve a class of indentation boundary value problems for both Neuber--Sokolovsky and power law model stress-deformation relations. The application of model $\bf{B}$-$\bf{H}$ and $\bf{D}$-$\bf{E}$ constitutive laws in the analysis of the propagation of plane polarised electromagnetic waves through nonlinear dielectric media has been described in~\cite{jkrv75}. In general terms, it was shown in \cite{wscr98} that model constitutive laws as constructed via the B\"{a}cklund approach introduced by Loewner, corresponds to solitonic solutions generated by a Darboux-type transformation.

With regard to reciprocal transformations and their role in the construction of model constitutive laws one may cite, in particular, the investigation of Storm in~\cite{ms51} concerning heat conduction in simple monatomic metals. Therein, a class of model $(c_p(T),k(T))$ temperature $T$-dependent laws was introduced for which a $(1+1)$-dimensional nonlinear heat conduction equation may be reduced via a reciprocal transformation to the classical linear heat equation. The applicability of these model laws was justif\/ied in \cite{ms51} for appropriate specif\/ic heat $c_p(T)$ and thermal conductivity $k(T)$ over wide temperature ranges for such materials as aluminium, silver, sodium, cadium, zinc, copper and lead. This kind of reduction via a reciprocal transformation may be extended to hyperbolic systems that correspond to multi-parameter model laws that arise in Cattaneo-type conduction and nonlinear visco-elasticity (see, e.g.,~\cite{bsev84}).

The preceding attests to the importance and wide range of physical applications of model constitutive laws in nonlinear continuum mechanics together with intriguing solitonic connections. The six classical Painlev\'e equations arise in a wide range of physical applications and play a fundamental role in modern soliton theory (see, e.g., \cite{pc05,rc99,fikn06,refGLS}). In the present work, model multi-parameter specif\/ic energy laws have been isolated which allow symmetry reduction of a Korteweg capillarity system to consideration of a hybrid Ermakov--Painlev\'e II equation and thereby to the linked integrable \Ptf\ equation. In conclusion, it is remarked that the Ermakov--Painlev\'e II symmetry reduction presented here is also valid for a superf\/luidity model system involving a de Broglie Bohm potential as set down in \cite{prnb01}.

\pdfbookmark[1]{References}{ref}
\LastPageEnding


\begin{thebibliography}{99}
\footnotesize\itemsep=0pt

\bibitem{mapc91}
Ablowitz M.J., Clarkson P.A., Solitons, nonlinear evolution equations and
 inverse scattering, \href{https://doi.org/10.1017/CBO9780511623998}{\textit{London Mathe\-matical Society Lecture Note Series}},
 Vol.~149, Cambridge University Press, Cambridge, 1991.

\bibitem{mahs81}
Ablowitz M.J., Segur H., Solitons and the inverse scattering transform,
 \textit{SIAM Studies in Applied Mathematics}, Vol.~4, Society for Industrial
 and Applied Mathematics (SIAM), Philadelphia, Pa., 1981.

\bibitem{pacr15}
Amster P., Rogers C., On a {E}rmakov--{P}ainlev\'e~{II} reduction in three-ion
 electrodif\/fusion. {A}~{D}irichlet boundary value problem, \href{https://doi.org/10.3934/dcds.2015.35.3277}{\textit{Discrete
 Contin. Dyn. Syst.}} \textbf{35} (2015), 3277--3292.

\bibitem{la96}
Antanovskii L.K., Microscale theory of surface tension, \href{https://doi.org/10.1103/PhysRevE.54.6285}{\textit{Phys. Rev.~E}}
 \textbf{54} (1996), 6285--6290.

\bibitem{lacrws87}
Antanovskii L.K., Rogers C., Schief W.K., A note on a capillarity model and the
 nonlinear {S}chr\"odinger equation, \href{https://doi.org/10.1088/0305-4470/30/16/001}{\textit{J.~Phys.~A: Math. Gen.}}
 \textbf{30} (1997), L555--L557.

\bibitem{ams15}
Assanto G., Minzoni A.A., Smyth N.F., On optical {A}iry beams in integrable and
 non-integrable systems, \href{https://doi.org/10.1016/j.wavemoti.2014.10.003}{\textit{Wave Motion}} \textbf{52} (2015), 183--193.

\bibitem{lbjncrws10}
Bass L., Nimmo J.J.C., Rogers C., Schief W.K., Electrical structures of
 interfaces: a {P}ainlev\'e~{II} model, \href{https://doi.org/10.1098/rspa.2009.0620}{\textit{Proc.~R. Soc. Lond. Ser.~A
 Math. Phys. Eng. Sci.}} \textbf{466} (2010), 2117--2136.

\bibitem{bclm}
Bassom A.P., Clarkson P.A., Law C.K., McLeod J.B., Application of uniform
 asymptotics to the second {P}ainlev\'e transcendent, \href{https://doi.org/10.1007/s002050050105}{\textit{Arch. Rational
 Mech. Anal.}} \textbf{143} (1998), 241--271, \href{http://arxiv.org/abs/solv-int/9609005}{solv-int/9609005}.

\bibitem{vbsv05}
Belashov V.Yu., Vladimirov S.V., Solitary waves in dispersive complex media.
 Theory, simulation, applications, \href{https://doi.org/10.1007/b138237}{\textit{Springer Series in Solid-State
 Sciences}}, Vol.~149, Springer-Verlag, Berlin, 2005.

\bibitem{sg13}
Benzoni-Gavage S., Planar traveling waves in capillary f\/luids,
 \textit{Differential Integral Equations} \textbf{26} (2013), 439--485.

\bibitem{refBB15}
Bertola M., Bothner T., Zeros of large degree {V}orob'ev--{Y}ablonski
 polynomials via a {H}ankel determinant identity, \href{https://doi.org/10.1093/imrn/rnu239}{\textit{Int. Math. Res.
 Not.}} \textbf{2015} (2015), 9330--9399, \href{http://arxiv.org/abs/1401.1408}{arXiv:1401.1408}.

\bibitem{boutroux}
Boutroux P., Recherches sur les transcendantes de {M}.~{P}ainlev\'e et
 l'\'etude asymptotique des \'equations dif\/f\'erentielles du second ordre,
 \textit{Ann. Sci. \'Ecole Norm. Sup.~(3)} \textbf{30} (1913), 255--375.

\bibitem{ablbcr12}
Bracken A.J., Bass L., Rogers C., B\"acklund f\/lux quantization in a model of
 electrodif\/fusion based on {P}ainlev\'e~{II}, \href{https://doi.org/10.1088/1751-8113/45/10/105204}{\textit{J.~Phys.~A: Math.
 Theor.}} \textbf{45} (2012), 105204, 20~pages, \href{http://arxiv.org/abs/1201.0673}{arXiv:1201.0673}.

\bibitem{refBM14}
Buckingham R.J., Miller P.D., Large-degree asymptotics of rational
 {P}ainlev\'e-{II} functions: noncritical behaviour, \href{https://doi.org/10.1088/0951-7715/27/10/2489}{\textit{Nonlinearity}}
 \textbf{27} (2014), 2489--2578, \href{http://arxiv.org/abs/1310.2276}{arXiv:1310.2276}.

\bibitem{refBM15}
Buckingham R.J., Miller P.D., Large-degree asymptotics of rational
 {P}ainlev\'e-{II} functions: critical behaviour, \href{https://doi.org/10.1088/0951-7715/28/6/1539}{\textit{Nonlinearity}}
 \textbf{28} (2015), 1539--1596, \href{http://arxiv.org/abs/1406.0826}{arXiv:1406.0826}.

\bibitem{rcrdjs12}
Carles R., Danchin R., Saut J.-C., Madelung, {G}ross--{P}itaevskii and
 {K}orteweg, \href{https://doi.org/10.1088/0951-7715/25/10/2843}{\textit{Nonlinearity}} \textbf{25} (2012), 2843--2873,
 \href{http://arxiv.org/abs/1111.4670}{arXiv:1111.4670}.

\bibitem{hcev74}
Cekirge H.M., Varley E., Large amplitude waves in bounded media~{I}. Ref\/lexion
 and transmission of large amplitude shockless pulses at an interface,
 \href{https://doi.org/10.1098/rsta.1973.0001}{\textit{Philos. Trans. Roy. Soc. London Ser.~A}} \textbf{273} (1973),
 261--313.

\bibitem{pc03+}
Clarkson P.A., Painlev\'e equations~-- nonlinear special functions,
 \href{https://doi.org/10.1016/S0377-0427(02)00589-7}{\textit{J.~Comput. Appl. Math.}} \textbf{153} (2003), 127--140.

\bibitem{pc03}
Clarkson P.A., Remarks on the {Y}ablonskii--{V}orob'ev polynomials,
 \href{https://doi.org/10.1016/j.physleta.2003.10.016}{\textit{Phys. Lett.~A}} \textbf{319} (2003), 137--144.

\bibitem{pc05}
Clarkson P.A., Painlev\'e equations~-- nonlinear special functions, in
 Orthogonal polynomials and special functions, \href{https://doi.org/10.1007/978-3-540-36716-1_7}{\textit{Lecture Notes in
 Math.}}, Vol.~1883, Springer, Berlin, 2006, 331--411.

\bibitem{pc16}
Clarkson P.A., On {A}iry solutions of the second {P}ainlev\'e equation,
 \href{https://doi.org/10.1111/sapm.12123}{\textit{Stud. Appl. Math.}} \textbf{137} (2016), 93--109, \href{http://arxiv.org/abs/1510.08326}{arXiv:1510.08326}.

\bibitem{pcem03}
Clarkson P.A., Mansf\/ield E.L., The second {P}ainlev\'e equation, its hierarchy
 and associated special polynomials, \href{https://doi.org/10.1088/0951-7715/16/3/201}{\textit{Nonlinearity}} \textbf{16} (2003),
 R1--R26.

\bibitem{cmcl88}
Clarkson P.A., McLeod J.B., A connection formula for the second {P}ainlev\'e
 transcendent, \href{https://doi.org/10.1007/BF00251504}{\textit{Arch. Rational Mech. Anal.}} \textbf{103} (1988),
 97--138.

\bibitem{dccr75}
Clements D.L., Rogers C., On the theory of stress concentration for
 shear-strained prismatical bodies with a non-linear stress-strain law,
 \href{https://doi.org/10.1112/S0025579300004472}{\textit{Mathematika}} \textbf{22} (1975), 34--42.

\bibitem{rc99}
Conte R. (Editor), The {P}ainlev\'e property. One century later, \href{https://doi.org/10.1007/978-1-4612-1532-5}{\textit{CRM Series in
 Mathematical Physics}}, Springer-Verlag, New York, 1999.

\bibitem{rccrws07}
Conte R., Rogers C., Schief W.K., Painlev\'e structure of a multi-ion
 electrodif\/fusion system, \href{https://doi.org/10.1088/1751-8113/40/48/F01}{\textit{J.~Phys.~A: Math. Theor.}} \textbf{40}
 (2007), F1031--F1040, \href{http://arxiv.org/abs/0711.0615}{arXiv:0711.0615}.

\bibitem{addhah02}
Degasperis A., Holm D.D., Hone A.N.W., A new integrable equation with peakon
 solutions, \href{https://doi.org/10.1023/A:1021186408422}{\textit{Theoret. and Math. Phys.}} \textbf{133} (2002), 1463--1474,
 \href{http://arxiv.org/abs/nlin.SI/0205023}{nlin.SI/0205023}.

\bibitem{dz95}
Deift P.A., Zhou X., Asymptotics for the {P}ainlev\'e~{II} equation,
 \href{https://doi.org/10.1002/cpa.3160480304}{\textit{Comm. Pure Appl. Math.}} \textbf{48} (1995), 277--337.

\bibitem{adurcr92}
Donato A., Ramgulam U., Rogers C., The $3+1$-dimensional Monge--Amp\`{e}re
 equation in discontinuity wave theory: application of a reciprocal
 transformation, \href{https://doi.org/10.1007/BF00424364}{\textit{Meccanica}} \textbf{27} (1992), 257--262.

\bibitem{ve80}
Ermakov V.P., Second-order dif\/ferential equations: conditions for complete
 integrability, \textit{Univ. Izv. Kiev} \textbf{20} (1880), no.~9, 1--25, see
 \href{https://doi.org/10.2298/AADM0802123E}{\textit{Appl. Anal. Discrete
 Math.}} \textbf{2} (2008), 123--145.

\bibitem{memm16}
Eslami M., Mirzazadeh M., Optical solitons with {B}iswas--{M}ilovic equation
 for power law and dual-power law nonlinearities, \href{https://doi.org/10.1007/s11071-015-2361-1}{\textit{Nonlinear Dynam.}}
 \textbf{83} (2016), 731--738.

\bibitem{refFN}
Flaschka H., Newell A.C., Monodromy- and spectrum-preserving deformations.~{I},
 \href{https://doi.org/10.1007/BF01197110}{\textit{Comm. Math. Phys.}} \textbf{76} (1980), 65--116.

\bibitem{refFGR}
Fokas A.S., Grammaticos B., Ramani A., From continuous to discrete {P}ainlev\'e
 equations, \href{https://doi.org/10.1006/jmaa.1993.1405}{\textit{J.~Math. Anal. Appl.}} \textbf{180} (1993), 342--360.

\bibitem{fikn06}
Fokas A.S., Its A.R., Kapaev A.A., Novokshenov V.Yu., Painlev\'e transcendents.
 The {R}iemann--{H}ilbert approach, \href{https://doi.org/10.1090/surv/128}{\textit{Mathematical Surveys and
 Monographs}}, Vol.~128, Amer. Math. Soc., Providence, RI, 2006.

\bibitem{refFW01}
Forrester P.J., Witte N.S., Application of the {$\tau$}-function theory of
 {P}ainlev\'e equations to random matrices: {PIV}, {PII} and the {GUE},
 \href{https://doi.org/10.1007/s002200100422}{\textit{Comm. Math. Phys.}} \textbf{219} (2001), 357--398,
 \href{http://arxiv.org/abs/math-ph/0103025}{math-ph/0103025}.

\bibitem{refFOU}
Fukutani S., Okamoto K., Umemura H., Special polynomials and the {H}irota
 bilinear relations of the second and the fourth {P}ainlev\'e equations,
 \textit{Nagoya Math.~J.} \textbf{159} (2000), 179--200.

\bibitem{gw89}
Gagnon L., Winternitz P., Lie symmetries of a generalised nonlinear
 {S}chr\"odinger equation. {II}.~{E}xact solutions, \href{https://doi.org/10.1088/0305-4470/22/5/013}{\textit{J.~Phys.~A: Math.
 Gen.}} \textbf{22} (1989), 469--497.

\bibitem{refGambier10}
Gambier B., Sur les \'equations dif\/f\'erentielles du second ordre et du premier
 degr\'e dont l'int\'egrale g\'en\'erale est a points critiques f\/ixes,
 \href{https://doi.org/10.1007/BF02393211}{\textit{Acta Math.}} \textbf{33} (1910), 1--55.

\bibitem{jgrj89}
Giannini J.A., Joseph R.I., The role of the second {P}ainlev\'e transcendent in
 nonlinear optics, \href{https://doi.org/10.1016/0375-9601(89)90860-8}{\textit{Phys. Lett.~A}} \textbf{141} (1989), 417--419.

\bibitem{refGLS}
Gromak V.I., Laine I., Shimomura S., Painlev\'e dif\/ferential equations in the
 complex plane, \href{https://doi.org/10.1515/9783110198096}{\textit{De Gruyter Studies in Mathematics}}, Vol.~28, Walter de
 Gruyter \& Co., Berlin, 2002.

\bibitem{hmcl81}
Hastings S.P., McLeod J.B., A boundary value problem associated with the second
 {P}ainlev\'e transcendent and the {K}orteweg--de~{V}ries equation,
 \href{https://doi.org/10.1007/BF00283254}{\textit{Arch. Rational Mech. Anal.}} \textbf{73} (1980), 31--51.

\bibitem{refIKO08}
Its A.R., Kuijlaars A.B.J., \"Ostensson J., Critical edge behavior in unitary
 random matrix ensembles and the thirty-fourth {P}ainlev\'e transcendent,
 \href{https://doi.org/10.1093/imrn/rnn017}{\textit{Int. Math. Res. Not.}} \textbf{2008} (2008), no.~9, rnn017, 67~pages,
 \href{http://arxiv.org/abs/0704.1972}{arXiv:0704.1972}.

\bibitem{refIKO09}
Its A.R., Kuijlaars A.B.J., \"Ostensson J., Asymptotics for a special solution
 of the thirty fourth {P}ainlev\'e equation, \href{https://doi.org/10.1088/0951-7715/22/7/002}{\textit{Nonlinearity}} \textbf{22}
 (2009), 1523--1558, \href{http://arxiv.org/abs/0811.3847}{arXiv:0811.3847}.

\bibitem{refJMi}
Jimbo M., Miwa T., Monodromy preserving deformation of linear ordinary
 dif\/ferential equations with rational coef\/f\/icients.~{II}, \href{https://doi.org/10.1016/0167-2789(81)90021-X}{\textit{Phys.~D}}
 \textbf{2} (1981), 407--448.

\bibitem{refJK}
Joshi N., Kitaev A.V., On {B}outroux's tritronqu\'ee solutions of the f\/irst
 {P}ainlev\'e equation, \href{https://doi.org/10.1111/1467-9590.00187}{\textit{Stud. Appl. Math.}} \textbf{107} (2001),
 253--291.

\bibitem{refKMi}
Kajiwara K., Masuda T., A generalization of determinant formulae for the
 solutions of {P}ainlev\'e~{II} and~{XXXIV} equations, \href{https://doi.org/10.1088/0305-4470/32/20/309}{\textit{J.~Phys.~A:
 Math. Gen.}} \textbf{32} (1999), 3763--3778, \href{http://arxiv.org/abs/solv-int/9903014}{solv-int/9903014}.

\bibitem{refKO96}
Kajiwara K., Ohta Y., Determinant structure of the rational solutions for the
 {P}ainlev\'e~{II} equation, \href{https://doi.org/10.1063/1.531648}{\textit{J.~Math. Phys.}} \textbf{37} (1996),
 4693--4704, \href{http://arxiv.org/abs/solv-int/9607002}{solv-int/9607002}.

\bibitem{refKametaka86}
Kametaka Y., Noda M., Fukui Y., Hirano S., A numerical approach to {T}oda
 equation and {P}ainlev\'e~{II} equation, \textit{Mem. Fac. Eng. Ehime Univ.}
 \textbf{9} (1986), 1--24.

\bibitem{ksc}
Kaminer I., Segev M., Christodoulides D.N.,
Self-accelerating self-trapped optical beams, \href{https://doi.org/10.1103/PhysRevLett.106.213903}{\textit{Phys. Rev. Lett.}}
\textbf{106} (2011), 213903, 4~pages.

\bibitem{refKanOch}
Kaneko M., Ochiai H., On coef\/f\/icients of {Y}ablonskii--{V}orob'ev polynomials,
 \href{https://doi.org/10.2969/jmsj/1191418760}{\textit{J.~Math. Soc. Japan}} \textbf{55} (2003), 985--993,
 \href{http://arxiv.org/abs/math.CA/0205178}{math.CA/0205178}.

\bibitem{jkrv75}
Kazakia J.Y., Venkataraman R., Propagation of electromagnetic waves in a
 nonlinear dielectric slab, \href{https://doi.org/10.1007/BF01596279}{\textit{Z.~Angew. Math. Phys.}} \textbf{26} (1975),
 61--76.

\bibitem{bkcr91}
Konopelchenko B., Rogers C., On {$(2+1)$}-dimensional nonlinear systems of
 {L}oewner-type, \href{https://doi.org/10.1016/0375-9601(91)90680-7}{\textit{Phys. Lett.~A}} \textbf{158} (1991), 391--397.

\bibitem{bkcr93}
Konopelchenko B., Rogers C., On generalized {L}oewner systems: novel
 integrable equations in {$2+1$} dimensions, \href{https://doi.org/10.1063/1.530377}{\textit{J.~Math. Phys.}}
 \textbf{34} (1993), 214--242.

\bibitem{dk01}
Korteweg D.K., Sur la forme que prennent les \'equations du mouvement des
 f\/luides si l'on tient compte des forces capillaires caus\'ees par des
 variations de densit\'e consid\'erables mais continues et sur la th\'eorie de
 la capillarit\'e dans l'hypoth\`ese d'une variation continue de la densit\'e,
 \textit{Arch. N\'eerl.} \textbf{6} (1901), 1--24.

\bibitem{jlop07}
Lee J.-H., Pashaev O.K., Solitons of the resonant nonlinear {S}chr\"odinger
 equation with nontrivial boundary conditions: {H}irota bilinear method,
 \href{https://doi.org/10.1007/s11232-007-0083-3}{\textit{Theoret. and Math. Phys.}} \textbf{152} (2007), 991--1003,
 \href{http://arxiv.org/abs/nlin.SI/0611003}{nlin.SI/0611003}.

\bibitem{jlopcrws07}
Lee J.-H., Pashaev O.K., Rogers C., Schief W.K., The resonant nonlinear
 {S}chr\"odinger equation in cold plasma physics. {A}pplication of
 {B}\"acklund--{D}arboux transformations and superposition principles,
 \href{https://doi.org/10.1017/S0022377806004648}{\textit{J.~Plasma Phys.}} \textbf{73} (2007), 257--272.

\bibitem{cl50}
Loewner C., A transformation theory of the partial dif\/ferential equations of
 gas dynamics, \textit{Tech. Notes Nat. Adv. Comm. Aeronaut.} \textbf{1950}
 (1950), 1--56.

\bibitem{cl52}
Loewner C., Generation of solutions of systems of partial dif\/ferential
 equations by composition of inf\/initesimal {B}aecklund transformations,
 \href{https://doi.org/10.1007/BF02825638}{\textit{J.~Anal. Math.}} \textbf{2} (1953), 219--242.

\bibitem{nl71}
Lukashevich N.A., On the theory of {P}ainlev\'e's second equation,
 \textit{Differ. Equ.} \textbf{7} (1971), 853--854.

\bibitem{em26}
Madelung E., Quantentheorie in hydrodynamischen {F}orm, \href{https://doi.org/10.1007/BF01400372}{\textit{Z.~Phys.}}
 \textbf{40} (1926), 322--326.

\bibitem{ms12}
Mahalov A., Suslov S.K., An ``{A}iry gun'': self-accelerating solutions of the
 time-dependent {S}chr\"odinger equation in vacuum, \href{https://doi.org/10.1016/j.physleta.2012.10.041}{\textit{Phys. Lett.~A}}
 \textbf{377} (2012), 33--38.

\bibitem{jm78}
Miles J.W., On the second {P}ainlev\'e transcendent, \href{https://doi.org/10.1098/rspa.1978.0103}{\textit{Proc. Roy. Soc.
 London Ser.~A}} \textbf{361} (1978), 277--291.

\bibitem{hn58}
Neuber H., Kerbspannungslehre. {G}rundlagen f\"ur genaue
 {F}estigkeitsberechnung mit {B}er\"ucksichtigung von {K}onstruktionsform und
 {W}erkstof\/f, \href{https://doi.org/10.1007/978-3-642-53069-2}{\textit{Texts and Monographs in Physics}}, Springer-Verlag, Berlin, 1958.

\bibitem{hn61}
Neuber H., Theory of stress concentration for shear-strained prismatical bodies
 with arbitrary nonlinear stress-strain law, \href{https://doi.org/10.1115/1.3641780}{\textit{J.~Appl. Mech.}}
 \textbf{28} (1961), 544--550.

\bibitem{wocr93}
Oevel W., Rogers C., Gauge transformations and reciprocal links in {$2+1$}
 dimensions, \href{https://doi.org/10.1142/S0129055X93000073}{\textit{Rev. Math. Phys.}} \textbf{5} (1993), 299--330.

\bibitem{refOkamotoPIIPIV}
Okamoto K., Studies on the {P}ainlev\'e equations. {III}.~{S}econd and fourth
 {P}ainlev\'e equations, {$P_{{\rm II}}$} and {$P_{{\rm IV}}$}, \href{https://doi.org/10.1007/BF01458459}{\textit{Math.
 Ann.}} \textbf{275} (1986), 221--255.

\bibitem{opjl02}
Pashaev O.K., Lee J.-H., Resonance solitons as black holes in {M}adelung f\/luid,
 \href{https://doi.org/10.1142/S0217732302007995}{\textit{Modern Phys. Lett.~A}} \textbf{17} (2002), 1601--1619,
 \href{http://arxiv.org/abs/hep-th/9810139}{hep-th/9810139}.

\bibitem{opjlcr08}
Pashaev O.K., Lee J.-H., Rogers C., Soliton resonances in a generalized
 nonlinear {S}chr\"odinger equation, \href{https://doi.org/10.1088/1751-8113/41/45/452001}{\textit{J.~Phys.~A: Math. Theor.}}
 \textbf{41} (2008), 452001, 9~pages.

\bibitem{prnb01}
Roberts P.H., Berlof\/f N.G., The nonlinear {S}chr\"odinger equation as a model
 of superf\/luidity, in Quantum Vortex Dynamics and Superf\/luid Turbulence,
 \href{https://doi.org/10.1007/3-540-45542-6_23}{\textit{Lecture Notes in Phys.}}, Vol.~571, Editors C.F.~Barenghi, R.J.~Donnelly, W.F.~Vinen, Springer, Berlin, 2001, 235--257.

\bibitem{cr68}
Rogers C., Reciprocal relations in non-steady one-dimensional gasdynamics,
 \href{https://doi.org/10.1007/BF01603278}{\textit{Z.~Angew. Math. Phys.}} \textbf{19} (1968), 58--63.

\bibitem{cr69}
Rogers C., Invariant transformations in non-steady gasdynamics and
 magnetogasdynamics, \href{https://doi.org/10.1007/BF01590430}{\textit{Z.~Angew. Math. Phys.}} \textbf{20} (1969),
 370--382.

\bibitem{cr86}
Rogers C., On a class of moving boundary problems in nonlinear heat conduction:
 application of a {B}\"acklund transformation, \href{https://doi.org/10.1016/0020-7462(86)90032-6}{\textit{Internat.~J. Non-Linear
 Mech.}} \textbf{21} (1986), 249--256.

\bibitem{cr14}
Rogers C., Integrable substructure in a {K}orteweg capillarity model. {A}
 {K}\'arm\'an--{T}sien type constitutive relation, \href{https://doi.org/10.1080/14029251.2014.894721}{\textit{J.~Nonlinear Math.
 Phys.}} \textbf{21} (2014), 74--88.

\bibitem{crabws99}
Rogers C., Bassom A.P., Schief W.K., On a {P}ainlev\'e~{II} model in steady
 electrolysis: application of a~{B}\"acklund transformation, \href{https://doi.org/10.1006/jmaa.1999.6589}{\textit{J.~Math.
 Anal. Appl.}} \textbf{240} (1999), 367--381.

\bibitem{crbmkcha10}
Rogers C., Malomed B., Chow K., An H., Ermakov--{R}ay--{R}eid systems in
 nonlinear optics, \href{https://doi.org/10.1088/1751-8113/43/45/455214}{\textit{J.~Phys.~A: Math. Theor.}} \textbf{43} (2010),
 455214, 15~pages.

\bibitem{crur89}
Rogers C., Ramgulam U., A nonlinear superposition principle and {L}ie group
 invariance: application in rotating shallow water theory,
 \href{https://doi.org/10.1016/0020-7462(89)90042-5}{\textit{Internat.~J.~Non-Linear Mech.}} \textbf{24} (1989), 229--236.

\bibitem{crws98}
Rogers C., Schief W.K., Intrinsic geometry of the {NLS} equation and its
 auto-{B}\"acklund transformation, \href{https://doi.org/10.1111/1467-9590.00093}{\textit{Stud. Appl. Math.}} \textbf{101}
 (1998), 267--287.

\bibitem{crws99}
Rogers C., Schief W.K., Geodesic motion in multidimensional unif\/ied gauge
 theories, \textit{Nuovo Cimento~B} \textbf{114} (1999), 1409--1412.

\bibitem{crws02}
Rogers C., Schief W.K., B\"acklund and {D}arboux transformations. Geometry and
 modern applications in soliton theory, \href{https://doi.org/10.1017/CBO9780511606359}{\textit{Cambridge Texts in Applied
 Mathematics}}, Cambridge University Press, Cambridge, 2002.

\bibitem{crws14}
Rogers C., Schief W.K., The classical {K}orteweg capillarity system: geometry
 and invariant transformations, \href{https://doi.org/10.1088/1751-8113/47/34/345201}{\textit{J.~Phys.~A: Math. Theor.}} \textbf{47}
 (2014), 345201, 20~pages.

\bibitem{crwspw97}
Rogers C., Schief W.K., Winternitz P., Lie-theoretical generalization and
 discretization of the {P}inney equation, \href{https://doi.org/10.1006/jmaa.1997.5674}{\textit{J.~Math. Anal. Appl.}}
 \textbf{216} (1997), 246--264.

\bibitem{wscr98}
Schief W.K., Rogers C., Loewner transformations: adjoint and binary {D}arboux
 connections, \href{https://doi.org/10.1111/1467-9590.00082}{\textit{Stud. Appl. Math.}} \textbf{100} (1998), 391--422.

\bibitem{bsev84}
Seymour B., Varley E., A {B}\"acklund transformation for a nonlinear telegraph
 equation, in Wave Phenomena: Modern Theory and Applications,
 \href{https://doi.org/10.1016/S0304-0208(08)71273-8}{\textit{North-Holland Mathematics Studies}}, Vol.~97,
 Editors C.~Rogers and T.M.~Moodie, North-Holland,
 Amsterdam, 1984, 299--306.

\bibitem{ysjh94}
Shi Y., Hearst J.E., The Kirchof\/f elastic rod, the nonlinear {S}chr\"odinger
 equation, and {DNA} supercoiling, \href{http://doi.org/10.1063/1.468506}{\textit{J.~Chem. Phys.}} \textbf{101}
 (1994), 5186--5200.

\bibitem{ms51}
Storm M.L., Heat conduction in simple metals, \href{https://doi.org/10.1063/1.1700076}{\textit{J.~Appl. Phys.}}
 \textbf{22} (1951), 940--951.

\bibitem{taj83}
Tajiri M., Similarity reductions of the one- and two-dimensional nonlinear
 {S}chr\"odinger equations, \href{https://doi.org/10.1143/JPSJ.52.1908}{\textit{J.~Phys. Soc. Japan}} \textbf{52} (1983),
 1908--1917.

\bibitem{refTaneda00}
Taneda M., Remarks on the {Y}ablonskii--{V}orob'ev polynomials, \href{https://doi.org/10.1017/S0027763000007431}{\textit{Nagoya
 Math.~J.}} \textbf{159} (2000), 87--111.

\bibitem{ht39}
Tsien H.-S., Two-dimensional subsonic f\/low of compressible f\/luids,
 \href{https://doi.org/10.2514/8.916}{\textit{J.~Aeronaut. Sci.}} \textbf{6} (1939), 399--407.

\bibitem{av65}
Vorob'ev A.P., On the rational solutions of the second {P}ainlev\'e equation,
 \textit{Differ. Equ.} \textbf{1} (1965), 79--81.

\bibitem{wwhhjm68}
Wagner W.G., Haus H.A., Marburger J.H., Large-scale self-trapping of optical
 beams in the paraxial ray approximation, \href{https://doi.org/10.1103/PhysRev.175.256}{\textit{Phys. Rev.}} \textbf{175}
 (1968), 256--266.

\bibitem{yxpsdmkkmmabmb16}
Xu Y., Suarez P., Milovic D., Khan K.R., Mahmood M.F., Biswas A., Belic M.,
 Raman solitons in nanoscale optical waveguides, with metamaterials, having
 polynomial law non-linearity, \href{https://doi.org/10.1080/09500340.2016.1193240}{\textit{J.~Modern Opt.}} \textbf{63} (2016),
 S32--S37.

\bibitem{ay59}
Yablonskii A.I., On rational solutions of the second Painlev\'e equation,
 \textit{Vesti AN BSSR, Ser. Fiz.-Tech. Nauk} (1959), no.~3, 30--35.

\end{thebibliography}
\end{document}